\newtheorem{Lemma}{Lemma}[section]
\newtheorem{Theorem}{Theorem}
\newtheorem{Remark}{Remark}[section]
\newcommand{\R}{\mathbb{R}}
\newcommand{\C}{\mathbb{C}}
        \newcommand{\tl}[1]{\tilde{#1}}
        \newcommand{\beq}{\begin{equation}}
        \newcommand{\eeq}{\end{equation}}
        \newcommand{\ba}{\begin{align}}
        \newcommand{\ea}{\end{align}}
        \newcommand{\ri}{\mathrm{i}}
        \newcommand{\re}{\mathrm{e}}
\newenvironment{Acknowledgement}%
 {\begin{trivlist}\item[]\textbf{Acknowledgements.}}{\end{trivlist}}
\makeatletter\@addtoreset{figure}{section}\makeatother
\makeatletter \@addtoreset{equation}{section} \makeatother
\title{ Fronts and patterns with a dynamic parameter ramp} 
\author{
Montie Avery\thanks{Department of Mathematics and Statistics, Boston University, 665 Commonwealth Ave., Boston,  MA 02215, USA;}, 
Odalys Garcia-Lopez\thanks{ Tufts University, Department of Mathematics, 177 College Avenue   
Medford, MA 02155 }, 
Ryan Goh\thanks{Department of Mathematics and Statistics, Boston University, 665 Commonwealth Ave., Boston,  MA 02215, USA;\texttt{rgoh@bu.edu} }, 
Benjamin Hosek\footnotemark[1], 
Ethan Shade\thanks{ University of Colorado - Boulder, Department of Applied Mathematics, 1111 Engineering Center, ECOT 225, 
Boulder, CO 80309}
}
\date{\today}
\begin{document}

\maketitle

\begin{abstract}
We examine the effect of a slowly-varying time-dependent parameter on invasion fronts for which an unstable homogeneous equilibrium is invaded by either another homogeneous state or a spatially periodic state. We first explain and motivate our approach by studying asymptotically constant invasion fronts in a scalar FKPP equation with time-dependent parameter which controls the stability of the trivial state.  Following recent works in the area, we use a linearized analysis to derive formal predictions for front position and leading-edge spatial decay. We then use a comparison principle approach to establish a rigorous spreading result in the case of an unbounded temporal parameter. We then consider patterned-invasion in the complex Ginzburg-Landau equation with dynamic bifurcation parameter, a prototype for slow passage through a spatio-temporal Hopf instability. Linearized analysis once again gives front position and decay asymptotics, but also the selected spatial wavenumber at the leading edge. We then use a Burger's modulation analysis to predict the slowly-varying wavenumber in the wake of the front.  Finally, in both equations, we used the recently developed concept of a space-time memory curve to characterize delayed invasion in the case where the parameter is initially stable before a subsequent slow passage through instability and invasion.  We also provide preliminary results studying invasion in other prototypical pattern formation models modified with a dynamic parameter, as well as numerical results for delayed transition between pushed and pulled fronts in Nagumo's equation with dynamic parameter.

\end{abstract}

\paragraph{Keywords: pattern formation, invasion front, dynamic bifurcation, temporal heterogeneity }

\paragraph{MSC Classification: 	35B36,35B32, 	37L10, 	35K57 }

\begin{Acknowledgement}
 The authors acknowledge partial support from the NSF through grants DMS-2006887, DMS-2307650  (RG and BH) and DMS-2202714 (MA).
\end{Acknowledgement}

\section{Introduction}

The evolution of coherent structures in the presence of a temporally-dynamic parameter has arisen as a topic of interest in a variety of physical settings. For example, such problems arise when considering striped patterns in a growing or evolving medium \cite{krause2023concentration,tsubota2024bifurcation,madzvamuse2010stability,KRECHETNIKOV201716,goh2023growing}, where the self-similar or apical growth of a domain can, via coordinate change, be represented by a spatio-temporal parameter. Such dynamic parameters also arise in ecological settings, where a model parameter, such as average rainfall, varies and induces phase slip dynamics and wavenumber jumps in patterns \cite{siteur2014beyond}; see also \cite{asch2025slow} for similar studies in a complex Ginzburg-Landau equation.   Other examples of defect formation and supression occur in wrinkled elastic membranes \cite{stoop2018defect}, as well as in fluid and optical systems \cite{casado2006testing,casado07}. In all of the above settings, one generally seeks to understand how the range and slope of the parameter ramp impact pattern characteristics, such as stripe wavenumber and defect distribution. 

While much work has been done on purely periodic phases, little work has been done studying how dynamic parameters interact with pattern-forming fronts which invade an unstable homogeneous state. In static parameter problems, this is a commonly proposed mechanism for selecting a defect-free patterned state with one specific wavenumber \cite{van2003front}. Such fronts arise when an unstable equilibrium is perturbed by compactly supported initial data which grows and spreads, leaving a coherent patterned state in the wake.  In this work, we seek to understand pattern-forming invasion fronts in the presence of a (slowly) evolving parameter. In particular, we to seek determine how the evolving parameter selects the spatial wavenumber at the front interface as well as how the bulk wavenumber evolves in its wake.


We remark there has been a variety of works which characterize how spatio-temporal heterogeneities affect invasion fronts which leave behind a non-patterned, spatially homogeneous state; sometimes referred to as an asymptotically constant front. This includes non-rigorous work studying front position and tail asymptotics \cite{mendez03, tsubota2024bifurcation,ambrosio2021generalized,DUCROT2023} in a variety of models, and a rich literature rigorously considering them in scalar reaction diffusion equations of the form
\begin{equation}
u_t = a(x,t)u_{xx} + f(u,x,t).
\end{equation} 
Under a variety of assumptions - say for example $a = 1$ and $f = f(u,t)$ bounded in $t$  for $u$ values in between the two asymptotic states of the front -  these works  define and characterize the concept of a \emph{generalized transition wave},  a front-like solution (with certain spatial asymptotics) defined for all $t\in \R$, with different asymptotic speeds at $t = \pm \infty$. Such results, in general, rigorously establish front existence, selection and convergence, and invasion properties for various types of nonlinearities and heterogeneities \cite{shen2006traveling,berestycki2007generalized,berestycki2008asymptotic,shen11,berestycki2012generalized,nadin12,rossi2014transition,nadin2015critical,shen2017stability},   

\paragraph{Overview of our results}
In this work, we assume a slowly-varying, and \emph{unbounded}, temporal parameter ramp,
\begin{equation}\label{e:mu}
    \mu(t) = \epsilon t + \mu_0,
\end{equation} 
and study free invasion fronts arising from localized or compactly supported initial conditions, first in the Fisher-KPP (FKPP) equation \eqref{e:fkpp0}, and second in the pattern-forming complex Ginzburg-Landau (CGL) equation  \eqref{e:cgl0}, listed below.

We use the former to explain phenomenon and motivate and explain our approach and introduce ideas.  The equation takes the form of a scalar reaction-diffusion equation
\begin{equation}\label{e:fkpp0}
u_t = u_{xx} + \mu(t) f(u),\qquad f(u) = u - u^2.
\end{equation}
 This equation produces accelerating fronts which are asymptotically constant in space and connect $u = 1$ to $u = 0$ with an increasingly steep interface; see Figure \ref{f:front_comp} and \ref{fig:5e-3}. We use a linear analysis to obtain accurate leading predictions for spreading properties, such as front position and spatial asymptotics of the front profile. 

To obtain more precise predictions for the front interface, we employ the recently developed \emph{space-time memory curve} concept used to characterize delayed onset of Hopf instability in  spatially extended systems \cite{kaper2018delayed,goh2022delayed,goh2024}. In our setting, such delayed bifurcation and front invasion is observed for $0<\epsilon\ll1$ and $\mu_0<0$. Here, a compactly supported initial condition initially decays pointwise while diffusively spreading, as long as $\mu(t)<0$. Then, as opposed to the spatially uncoupled system where the onset of the large amplitude state is symmetric and occurs at time $t = -2\mu_0/\epsilon$, diffusive coupling leads to onset values of $\mu$ which are spatially-dependent $\mu = \mu_{mc}(x).$  Our phenomological results on the delayed invasion of the nonlinear front complement the recent work \cite{jelbart2024charac} which develops geometric blow up and self-similar variable techniques to track solutions with spatially-localized initial data in a neighborhood of $(u,\mu) = (0,0)$, showing that they stay near the spatially homogeneous solutions of the associated slowly-varying reaction kinetics ODE $u' = \mu f(u)$.

As our specific setting (i.e. where $\mu$ is unbounded and negative for $t<0$) has not, to our knowledge, been given a rigorous treatment in the aforementioned rigorous results, we also give a rigorous proof in Theorem \ref{t:sp} below which establishes spreading properties from (one-sided) compactly supported initial data in the time-heterogeneous Fisher-KPP equation \eqref{e:fkpp0}. It confirms that the asymptotic front position is given at leading-order by the linear prediction $x = \left(4t\int_0^t\mu(s)ds \right)^{1/2}$ for $t$ sufficiently large. We believe such a result, while expected given the previous literature, is new due to the unboundedness of the parameter $\mu(t).$ We use comparison principle techniques and develop a novel sub-solution which allows us to characterize the accelerating front and confirm the predicted front position from the aforementioned linear analysis.



To study patterned fronts, we consider the complex Ginzburg-Landau equation with super-critical nonlinearity,
\begin{equation}\label{e:cgl0}
    A_t = (1+i\alpha) A_{xx} + \mu(t) A - (1+i\gamma)A|A|^2,
\end{equation}
a prototypical model for oscillatory instability and Hopf bifurcaton, as well as patterns, in spatially extended domains \cite{aranson2002world,mielke2002ginzburg}.
For $\mu>0$ constant in time, this equation supports periodic wavetrains as well as fronts which connect $A = 0$ ahead of the front interface to a non-constant, locally periodic ``plane-wave" state $A = r e^{i (kx - \omega t)}$; see Figure \ref{f:cgl_wave} for a depiction of a pattern-forming front in \eqref{e:cgl0}.

We remark here that we use a parameter heterogeneity which only multiplies the linear term as it simplifies the frozen-coefficient nonlinear dispersion relation which relates the local amplitude $r$, wavenumber $k$, and temporal frequency $\omega$ of the above plane waves. Further, this nonlinearity has direct connections to other works which investigate dynamic slow passage through a Hopf bifurcation; see references on delayed Hopf bifurcations above.  We do note that similar results would hold for a nonlinearity of the form $\mu(t)(A - (1+i\gamma)A|A|^2)$. Conversely, we mention here that $\mu(t)$ was introduced outside of the nonlinearity in FKPP (Equation \eqref{e:fkpp0}) to keep solutions bounded for all time $t>0$ which simplifies our rigorous result in Section \ref{s:rig}.

In both equations, we use a Green's function analysis of the linearization about the trivial base state to describe the front position, local invasion speed, and leading edge profile of the invasion front. In the case of the CGL equation, \eqref{e:cgl0}, we then extend this analysis to predict the selected temporal frequency and hence local spatial wavenumber at the front interface. We then employ this prediction as a time-dynamic inhomogeneous Dirichlet boundary condition for a Burgers-type modulational analysis which predicts the local wavenumber of the pattern left behind in the wake of the front.



This work is organized as follows: Section \ref{s:kpp-ph} describes the observed phenomena and derives our formal predictions for front behavior in the FKPP equation \eqref{e:fkpp0} while Section \ref{s:dhb-k} studies delayed bifurcation and invasion when $\mu_0<0$. For the pattern-forming CGL equation \eqref{e:cgl0}, Section \ref{ss:cgl_le} describes phenomena and gives formal predictions for both front position, spatial decay, and leading-edge wavenumber. Section \ref{ss:cgl_bulk} then derives and compares a prediction for the bulk wavenumber while Section \ref{s:dhb-cgl} characterizes spatially-dependent bifurcation delay. In Section \ref{s:rig}, we then return to the FKPP equation and state and prove our rigorous spreading result, Theorem \ref{t:sp}. Finally, Section \ref{s:disc} briefly gives preliminary results on and discusses the extension of our approach to other pattern forming systems such as the Swift-Hohenberg and Cahn-Hilliard equations as well slow transitions between pushed and pulled invasion in a time-dynamic Nagumo equation. We also note that source codes used to produce the computational results of this work can be found at the GitHub repository \url{https://github.com/ryan-goh/fronts-and-patterns-temporal-ramp}.


\section{Phenomena and formal predictions - asymptotically constant fronts}\label{s:kpp-ph}

In order to introduce our formal approach, we first consider the FKPP equation \eqref{e:fkpp0} above and derive a leading-order prediction for the front interface location as time evolves. We remark that similar results, albeit with slightly different derivations or in different settings, can be found in \cite{tsubota2024bifurcation,mendez03}. To predict the leading order front position, it suffices to consider the linearized equation 
\begin{equation}\label{e:lfkpp}
v_t = v_{xx} + \mu(t) v.
\end{equation}
Introducing an integrating factor $v = w \exp\left[ \int_0^t \mu(s)ds\right]$ gives that $w$ solves a heat equation, and hence $v$ has the solution form
$$
v(x,t) = \frac{\exp\left[ \int_0^t \mu(s)ds\right]}{\sqrt{4\pi t}}\int_\R \re^{-\frac{(x-y)^2}{4t}} v(y,0)dy.
$$
Similar to \cite{tsubota2024bifurcation}, we take a delta function initial condition $v(x,0) = \delta_0(x)$ to find 
$$
v(x,t) = \frac{\exp\left[ \int_0^t \mu(s)ds\right]}{\sqrt{4\pi t}} \re^{-\frac{x^2}{4t}}.
$$
We track the right-ward spread of this initial condition by fixing a threshold value $u_\mathrm{th} \in (0,1/2)$ and defining the front interface location to be $x_\mathrm{fr}(t) = \inf_{x\in \R_+}\{ u(x,t) < u_\mathrm{th}\}$. Setting $v(x,t) = u_\mathrm{th}$, solving for $x$, and keeping only the leading-order term in $t\gg1$, we obtain,
\begin{equation}\label{e:sig}
x_\mathrm{fr}(t) \approx \sigma(t):=\left( 4 t \int^t_0 \mu(s) ds\right)^{1/2}.
\end{equation}
If $\mu_0 = 0$ we have $\sigma(t) = \sqrt{2\epsilon t^3} = \frac{\sqrt{2}}{\epsilon}\mu(t)^{3/2}$ and, by differentiating, the instantaneous front speed is
\begin{equation}\label{e:ct}
 x_\mathrm{fr}'(t) \approx c(t):=\sigma'(t) = \frac{3}{2}\sqrt{2\epsilon t} = \frac{3}{2}\sqrt{2\mu(t)}.
\end{equation}

We remark that a frozen-coefficient analysis, where one freezes $t$ and derives an instantaneous linearly selected \emph{pulled} invasion speed $c_\mathrm{frz}(t):=2\sqrt{\mu(t)}$ and then defines $x_{f,\mathrm{frz}}(t) = \int_0^t c_\mathrm{frz}(s)ds$, does not accurately capture the front position.


Figure \ref{f:front_comp} (left) depicts the measured front location in direct numerical simulation plotted against the above leading-order prediction $\sigma(t)$, as well as the frozen-coefficient prediction $x_{f,\mathrm{frz}}(t)$. We find good agreement only with the former prediction and that the naive frozen-coefficient prediction underestimates the front position.

\begin{figure}
    \centering
    \hspace{-0.2in}
    \includegraphics[width=0.35\linewidth]{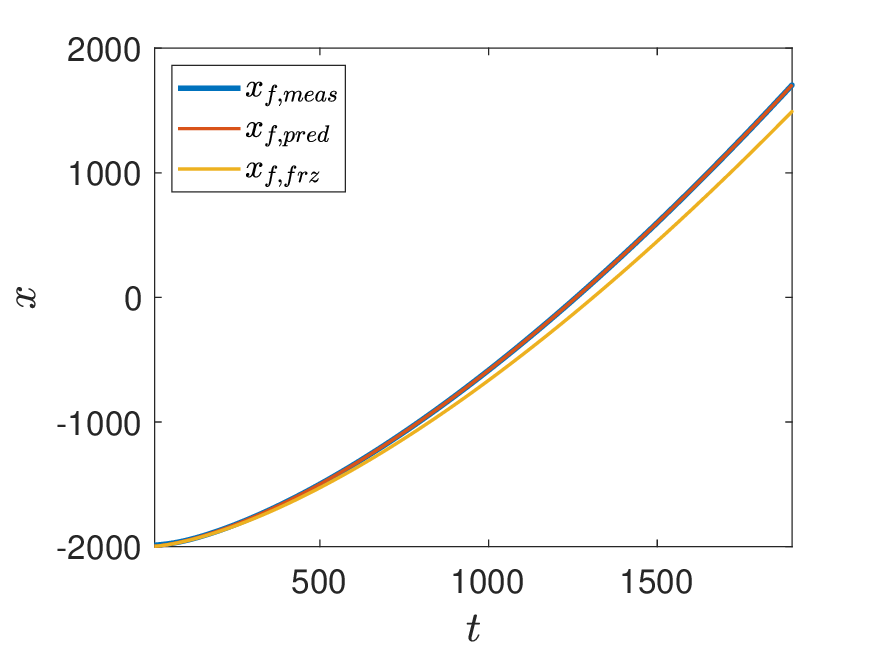} \hspace{-0.1in} 
     \includegraphics[width=0.35\linewidth]{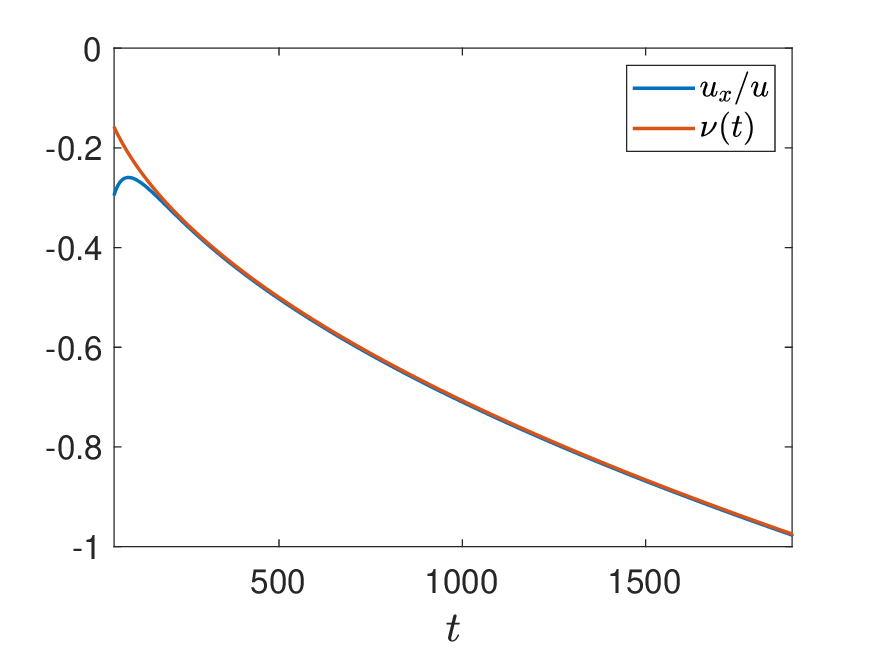}\hspace{-0.2in} 
      \includegraphics[width=0.35\linewidth]{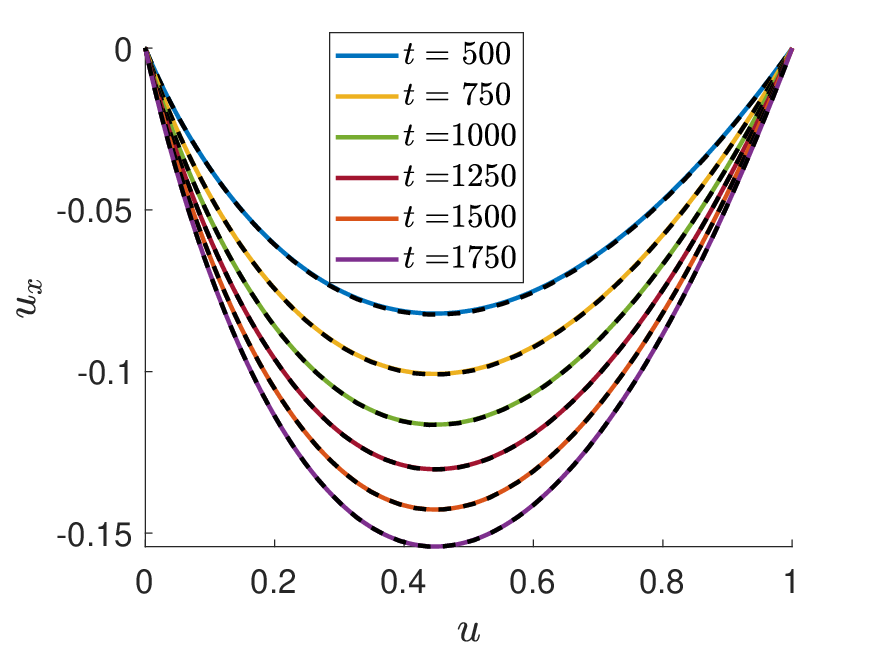}\hspace{-0.2in}
    \caption{Comparison of numerical simulations with predictions. The numerics were performed using 4th-order centered finite-differences in space, and 2nd order balanced Strang splitting \cite{speth13} in time with $dx = 0.005$ and $dt = 0.0025$. Left: A comparison of the numerically measured front position (blue) to both the frozen coefficients prediction (yellow) and the linear prediction (orange). (Center): A comparison of the measured steepness (blue) to the natural asymptotic steepness (orange). (Right): Plot of front profile $u(x,t)$ in $(u,u_x)$ phase plane, for several values of $t$ (given in plot), overlayed with the unique traveling wave trajectory of \eqref{e:tw0} for the same values of $t$ in dashed black.}
    \label{f:front_comp}
\end{figure}

Following \cite{tsubota2024bifurcation}, we can then predict the leading-order spatial asymptotics of the front by transforming into a co-moving frame $z = x - \sigma(t)$, obtaining 
\begin{equation}\label{e:nu}
v(z,t) = \frac{1}{\sqrt{4\pi t}}\exp\left[-\frac{z^2}{4t} - \frac{\sigma(t)}{2t}z\right]
\end{equation}
so that the linear solution, and it turns out the nonlinear solution, has leading order spatial tail $u(z,t)\approx \re^{\nu(t) z}$ with $$\nu(t) = -\sigma(t)/(2t) = -\sqrt{\frac{\epsilon t}{2}} = -\sqrt{\mu(t)/2},$$ indicating that the front steepens as time increases. 

This accurately predicts the steepness of the front found in numerical simulation. Here we measure the time-dependent steepness by computing $u_x(x,t)/u(x,t)$ for $x$ values just ahead of the front interface.
The two are compared in Figure \ref{f:front_comp}(center), where we observe that, after an initial transient as the front establishes itself, the two are in good agreement.
We also remark that the the asymptotic front speed $c(t)$ and front decay rate $\nu(t)$ will be crucial in our rigorous analysis of the nonlinear equation in Section \ref{s:rig}.  We also note that the invasion front exhibits weaker spatial decay than that predicted by the frozen-coefficient analysis since $\nu_\mathrm{frz} = -c_\mathrm{frz}/2 = -\sqrt{\mu} \leq -\sqrt{\mu/2}=\nu <0.$ From point of view of ``envelope velocities" \cite{holzer2014criteria}, this is consistent with the accelerated invasion we observe. When the compact initial data initially spreads, the small $\mu(t)$ prepares a weakly decaying tail, which as $\mu(t)$ increases, causes a instantaneous speed $c(t)$ which is faster than the frozen-coefficient speed  $c_\mathrm{frz}.$

We observe that the nonlinear front profile is well-approximated by the frozen-coefficient traveling wave obtained for each $t>0$ by fixing $\mu = \mu(t), c = \frac{3}{2}\sqrt{2\mu}$ and solving the asymptotic boundary value problem
\begin{equation}\label{e:tw0}
0=u''(z) + cu'(z) + \mu f(u(z)),\qquad \lim_{z\rightarrow-\infty} u(z) = 1,\qquad \lim_{z\rightarrow+\infty} u(z) = 0.
\end{equation}
A comparison of this frozen coefficient traveling wave with numerical simulation is given in Figure \ref{f:front_comp}(right).
This solution is unique up to translations in $z$ and, since the selected speed $c$ is greater than the critical speed $c_{\mathrm{frz}} = 2\sqrt{\mu}$, the spatial profile $u(z)$ has strong exponential decay $u(z)\sim e^{\nu z},\,\, z\gg1$ with $\nu$ as defined above. 

We also remark that this comparison of the front profile in the $u,u_x$ plane indicates that the recently developed ``shape defect function" analysis \cite{an2023quantitative,an2024front} might be of use in characterizing the front; we do not pursue this avenue in this work. We note that \cite{tsubota2024bifurcation} performs a similar analysis albeit for a constant coefficient Ginzburg-Landau equation $A_t = A_{xx} + A - A|A|^2,\,\, A\in \C$ posed in a growing domain $x\in [0, L(t)]$ which, after a time-dependent spatial scaling, yields a equation with time dependent diffusion coefficient.

\section{Higher-order position corrections, space-time memory curves, and delayed invasion}\label{s:dhb-k}

We also explore spatio-temporal delay of invasion by setting $\mu_0<0$ and $0<\epsilon \ll1$ so that the trivial state is initially stable and localized perturbations first decay before beginning to grow after $\mu(t)$ passes through 0 at $t_0 = -\mu_0/\epsilon$.  Without spatial coupling, one would expect a symmetric bifurcation delay, where the solution becomes full amplitude only when $\mu$ reaches roughly $-\mu_0$ (at $t_0 = -2\mu_0/\epsilon$).  In the PDE, the diffusion term induces a spatially dependent delay of bifurcation to the large amplitude state, and hence a delay of spatial invasion in the case of a strongly localized initial condition.  This spatio-temporal delay of spreading can be measured using the linearized analysis above to obtain a curve in the $x,t$ plane which demarcates the pointwise transition to the large amplitude state. In the context of delayed bifurcation, this curve is known as the \emph{space-time memory curve}. It was established in the works \cite{kaper2018delayed,goh2022delayed} and  coined in the recent work \cite{goh2024}. While these works considered spatio-temporal delayed Hopf bifurcation in a CGL equation with an additional symmetry breaking forcing term, such analysis also applies to the FKPP equation \eqref{e:fkpp0} considered above, where the corresponding ODE bifurcation would be a delayed transcritical bifurcation; see also \cite{jelbart2024charac}.  From another point of view, the space-time memory curve gives the precise front position of the linear dynamics, and hence includes the higher-order corrections to the leading-order prediction in \eqref{e:sig}. 

Following \cite{goh2022delayed}, we compute the spacetime memory curves for the FKPP equation with Gaussian initial data $u(x,0)=\exp(-x^2)$, an example initial condition which allows explicit computation. For general initial conditions the curve can be computed by solving an implicit equation or via numerical approximation. Inserting this specific initial condition into the solution formula for the linearized equation \eqref{e:lfkpp} we obtain,
$$
v(x,t) = \frac{1}{\sqrt{4t +1}} \exp\left[ -\frac{x^2}{4t+1} +  \frac{\epsilon t^2}{2} + \mu_0 t \right].
$$

Once again setting $v(x,t) = v_\mathrm{th}$ and solving $v(x,t) = v_\mathrm{th}$, we obtain
\begin{equation}\label{e:memc}
x_\mathrm{mc}(t) = \pm \sqrt{ (4t + 1)\left(  \frac{\epsilon t^2}{2} + \mu_0 t - \log(v_\mathrm{th}) - \frac{1}{2}\log(4t +1)  \right)}.
\end{equation}
Figures \ref{fig:5e-3} left, \ref{fig:e-3} left, and \ref{fig:5e-4} left depict a spacetime diagram of the solution $u$ with $x_\mathrm{mc}$ overlaid in red for $\epsilon = 5\times 10^{-3},1\times 10^{-3},$ and $ 5\times10^{-4}$.  We find good agreement between the linearly predicted space-time memory curve and the nonlinear solution.  We again reiterate that the leading-order front position \eqref{e:sig} can be obtained from \eqref{e:memc} by extracting the leading-order term for $t\gg1.$
\begin{figure}[h!]
    \centering
    \hspace{-0.1in}\includegraphics[width=0.33\textwidth]{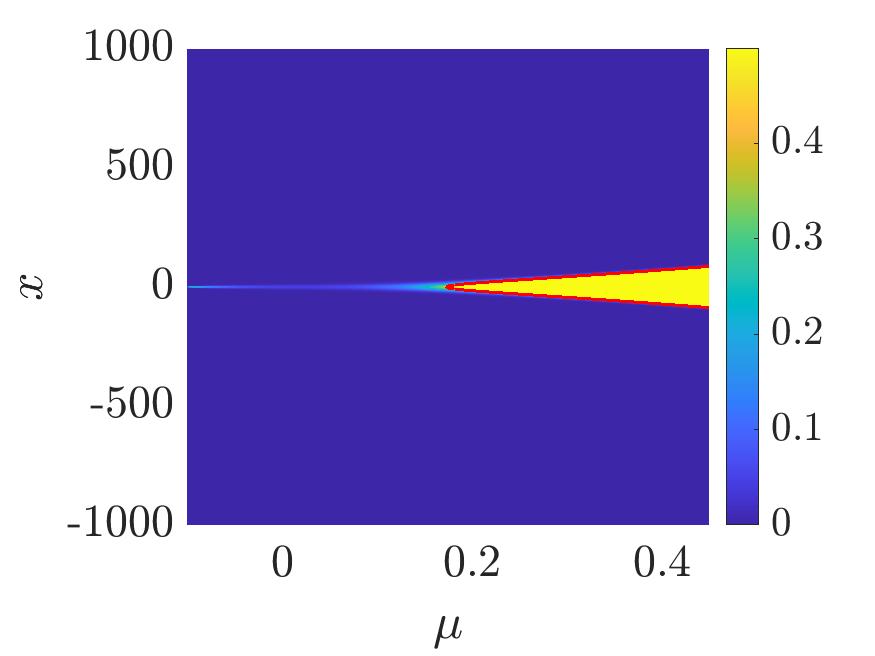}\hspace{-0.05in}	\includegraphics[width=0.33\textwidth]{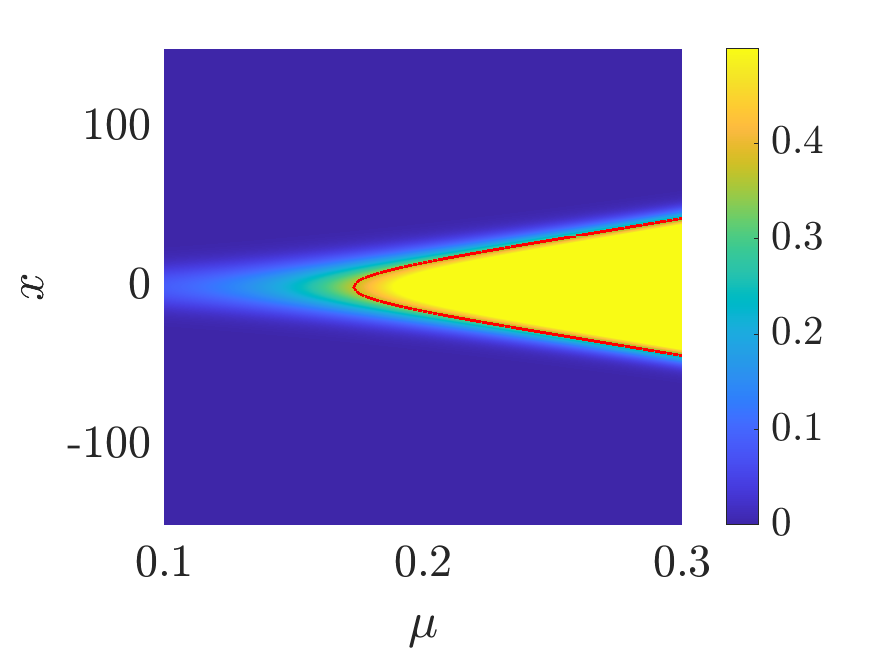}\hspace{-0.05in}
    \includegraphics[width=0.33\textwidth]{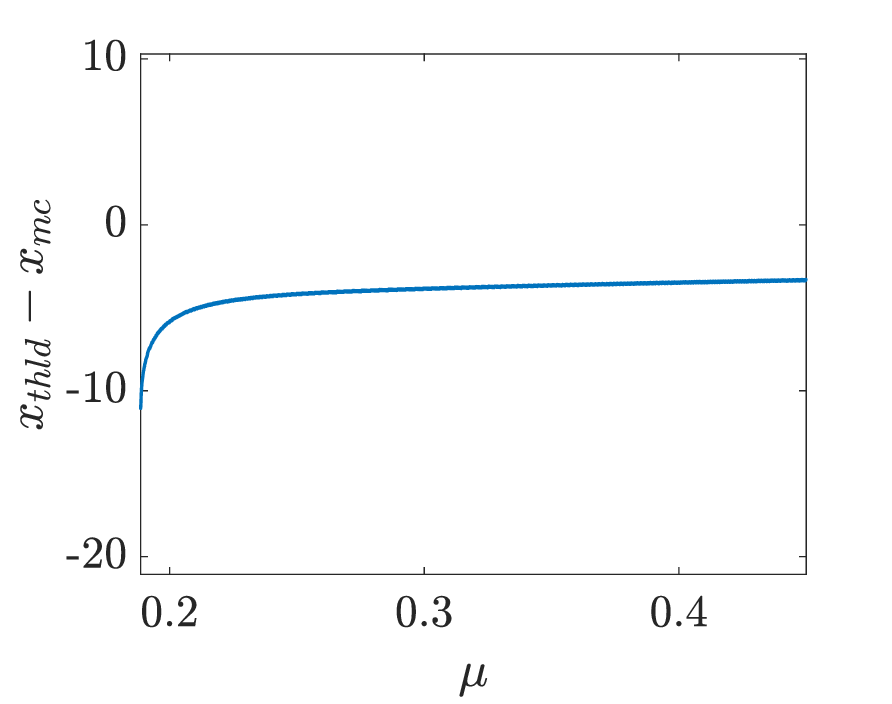}\hspace{-0.1in}
    \caption{$\epsilon =5\times 10^{-3}, u_{th}=0.5,\mu_0=-0.1$.  Left: A numerical simulation of the front position, overlaid by the memory curve.  Center: A closer look around the time at which the front starts spreading.  Right: The error between the numerical measurement and the prediction given by the memory curve.}
    \label{fig:5e-3}
\end{figure}
\begin{figure}[h!]
    \centering
    \hspace{-0.1in}\includegraphics[width=0.33\textwidth]{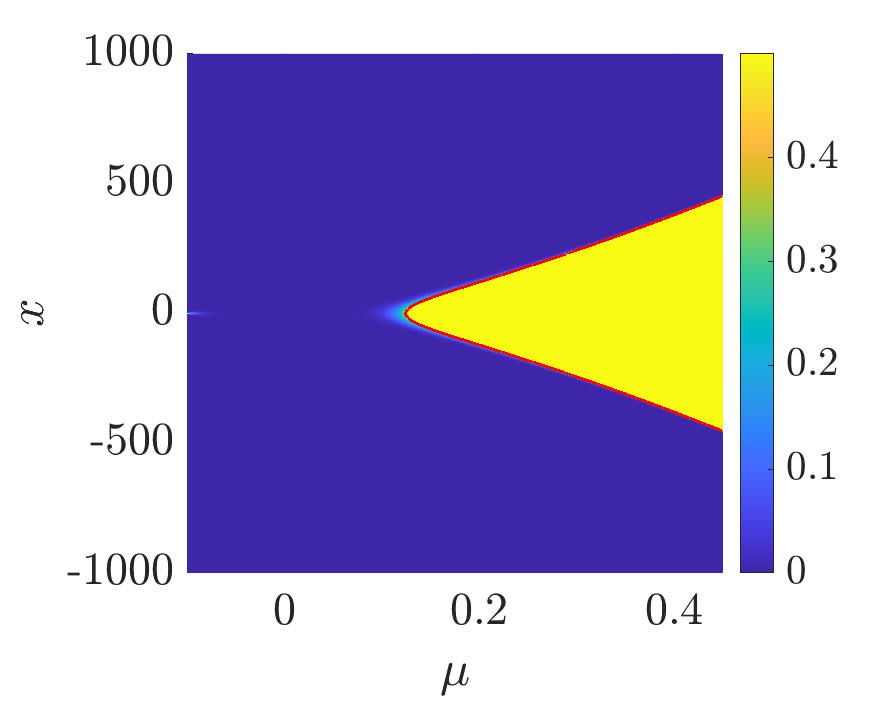}\hspace{-0.05in}\includegraphics[width=0.33\textwidth]{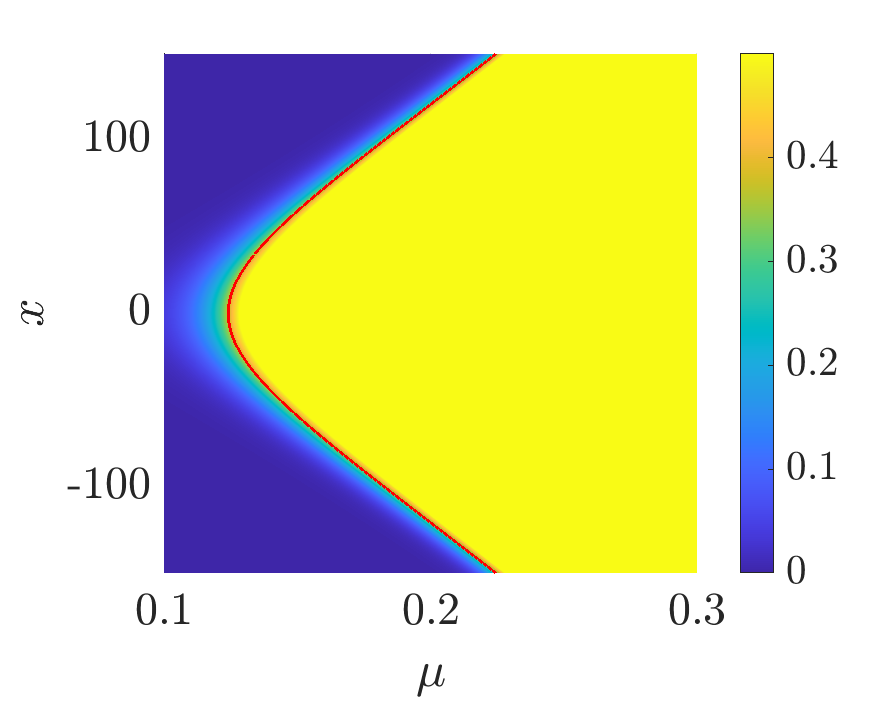}\hspace{-0.05in}
    \includegraphics[width=0.33\textwidth]{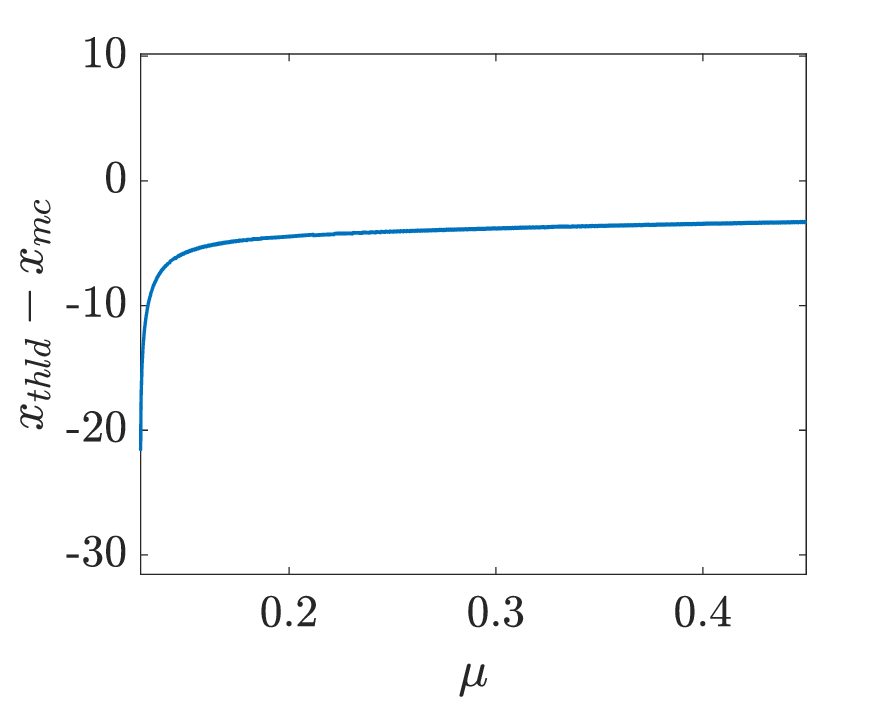}\hspace{-0.1in}
    \caption{$\epsilon =1\times10^{-3}, u_{th}=0.5,\mu_0=-0.1$.  Same plots as in Fig. \ref{fig:5e-3}.}
    \label{fig:e-3}
\end{figure}
\begin{figure}[h!]
    \centering
    \hspace{-0.1in}\includegraphics[width=0.33\textwidth]{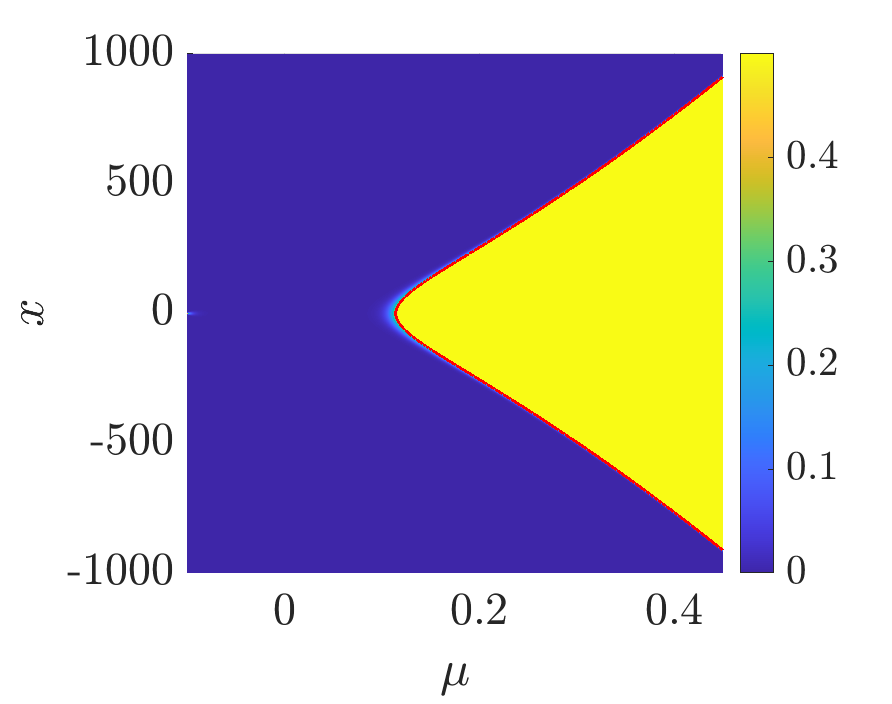}\hspace{-0.05in}\includegraphics[width=0.33\textwidth]{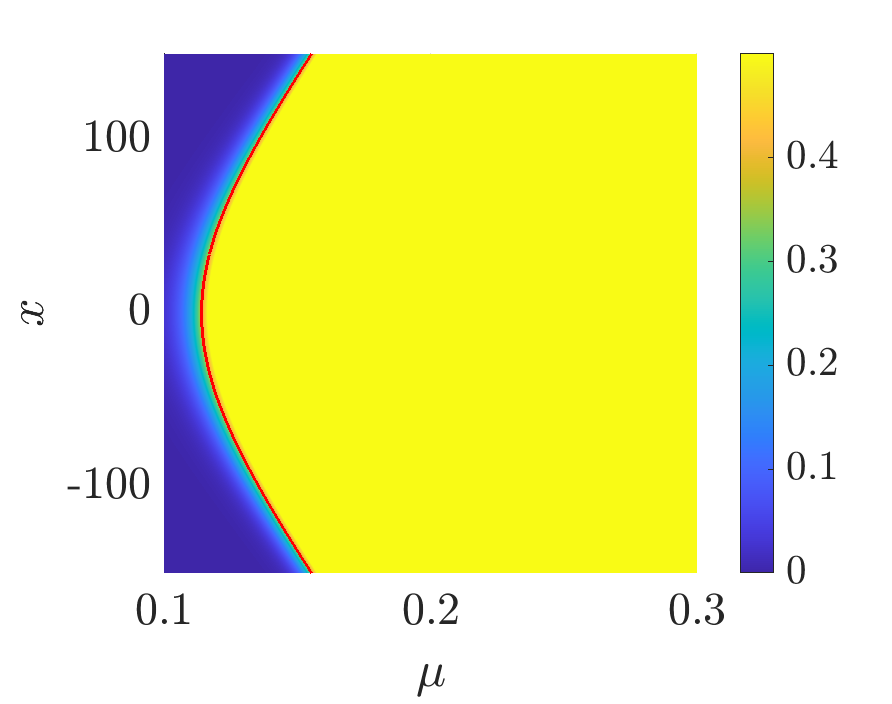}\hspace{-0.05in}
    \includegraphics[width=0.33\textwidth]{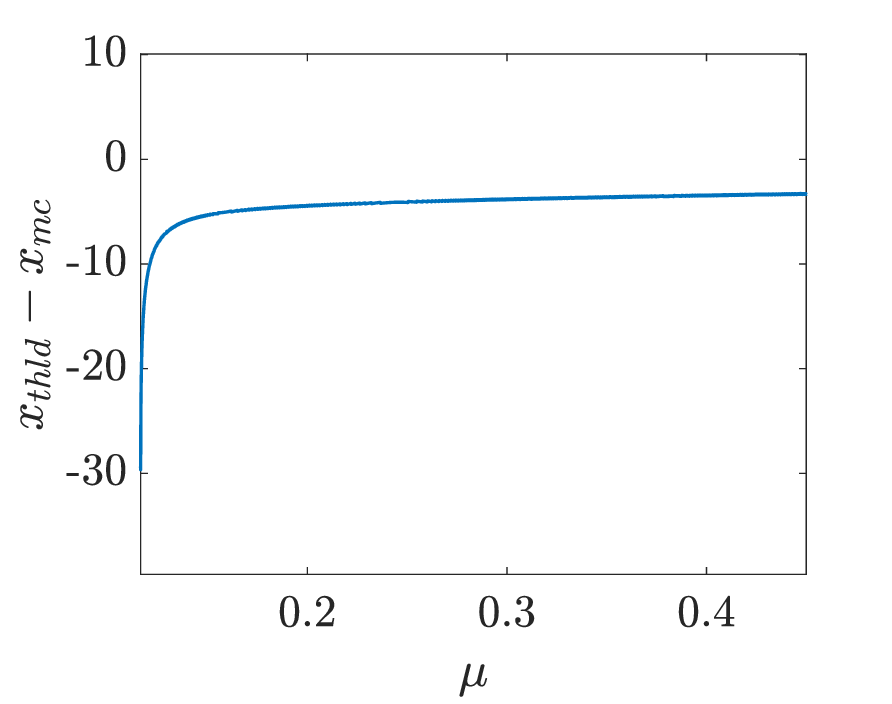}\hspace{-0.1in}
    \caption{$\epsilon =5\times 10^{-4}, u_{th}=0.5,\mu_0=-0.1$. Same plots as in Fig. \ref{fig:5e-3}} 
    \label{fig:5e-4}
\end{figure}
In the right plots of Figures \ref{fig:5e-3}, \ref{fig:e-3}, and \ref{fig:5e-4} we show the error between the numerically measured front position $x_f$ and the prediction given by $x_{\mathrm{mc}}$.  We note that in all cases, this difference is negative, implying that, for a fixed $x$ location, the nonlinear solution becomes large amplitude after the linear solution, indicating a further (likely $\epsilon$-dependent) temporal delay. 
Also in all cases, the discrepancy appears to be approaching a constant shift, uniform in $\epsilon$, despite smaller values of $\epsilon$ causing more significant initial differences.

Focusing on the solution at $x = 0$, we note that spatial diffusive coupling induces a delay in large-amplitude growth compared with the homogeneous symmetric exit time computed from the homogeneous linear ODE.  The center plots of Figures \ref{fig:5e-3}, \ref{fig:e-3}, and \ref{fig:5e-4} all have their range in $t$ starting from the symmetric exit time, i.e., the left boundary of the zoomed-in image is $t_0 = -2\mu_0/\epsilon$.   It is clear from these that the symmetric exit time, is not representative of either the spacetime memory curve or the numerically measured solution.  


\section{Phenomena and formal predictions - pattern-forming fronts}\label{s:cgl}

We now consider invasion in the CGL equation \eqref{e:cgl0} with time-dependent linear parameter $\mu$. Evolving once again from a localized initial data, we observe the same accelerating front interface as $\mu$ increases, but now the local phase at the leading-edge oscillates with increasing frequency as time moves forward.  This oscillatory tail, via the nonlinear dispersion relation for periodic plane waves, establishes a local spatial wavenumber just behind the interface. This oscillatory state then becomes large-amplitude and then mixes with the bulk. In sum, the accelerating front leaves behind a large-amplitude, locally periodic state with slowly-varying amplitude and wavenumber; see Figure \ref{f:cgl_wave}.


\begin{figure}[h!]
    \centering
    \includegraphics[width=0.6\linewidth,trim={0 0 1.5cm 0.0in},clip]{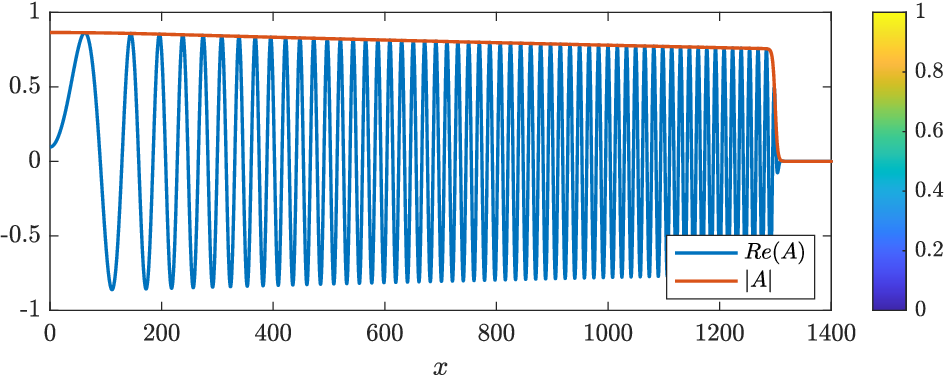}\\    
   \hspace{-0.2in} \includegraphics[width=.525\textwidth]{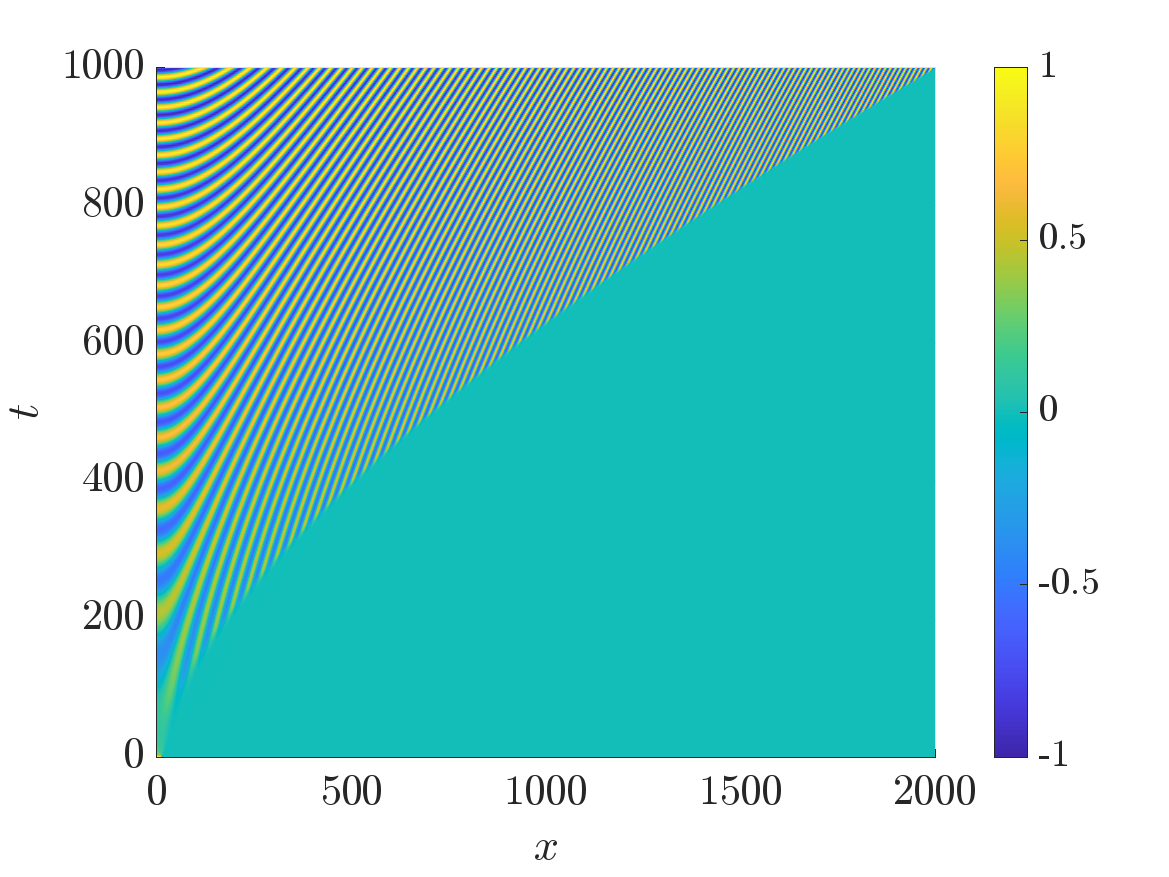}\hspace{-0.25in}
    \includegraphics[width=.525\textwidth]{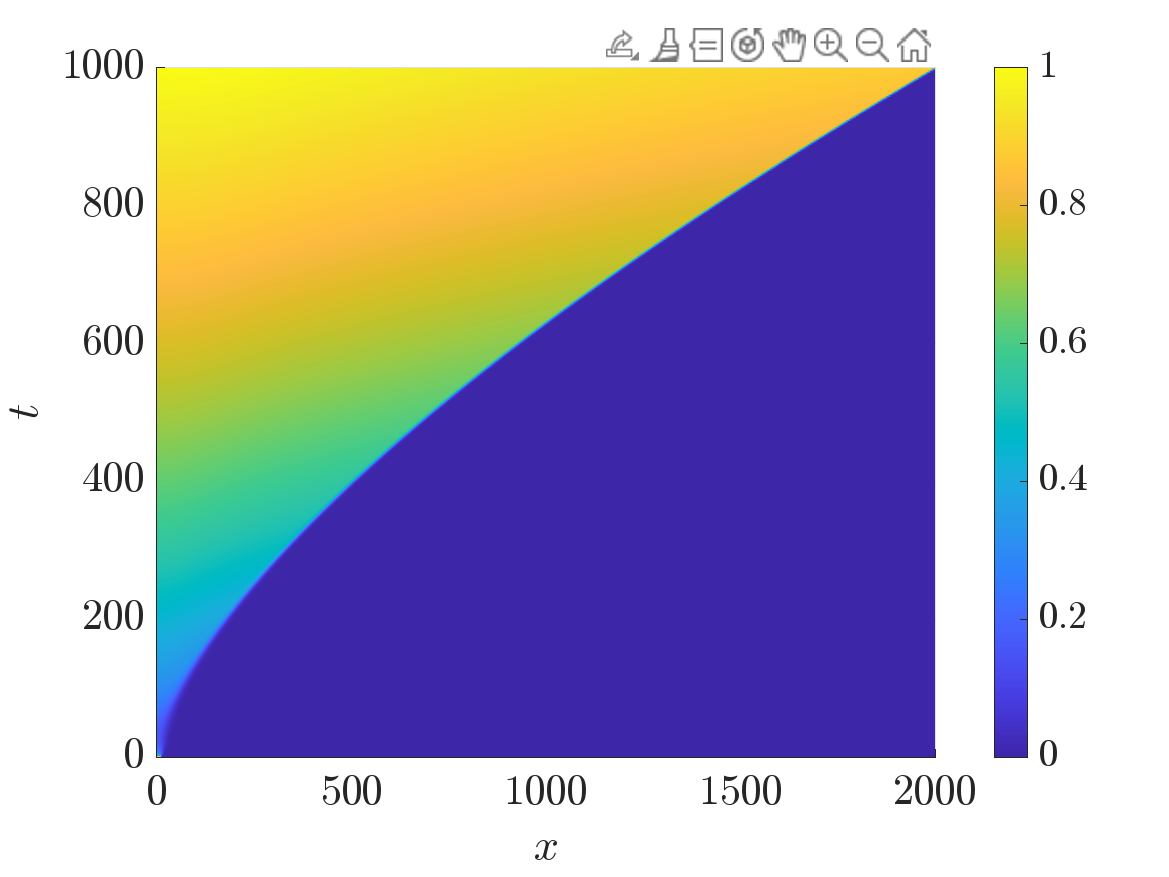}\hspace{-0.2in}
    \caption{Top: A pattern-forming front produced by the complex Ginzburg-Landau equation with an increasing, slowly-varying parameter.  Here $t = 1000$, and $\alpha=1,\gamma=0.3,$ and $\epsilon=0.001$, which are the parameters used throughout this section. We used spatial and temporal discretizations of $dx=0.1, dt=0.0025$ and the same numerical algorithm as used for FKPP. Bottom left: Spacetime-diagram of $\mathrm{R}\, A$; Bottom Right: Space-time diagram of $|A|$. }
    \label{f:cgl_wave}
\end{figure}

\subsection{Leading-edge front location and wavenumber prediction}\label{ss:cgl_le}
We once again consider the linearized $B_t = (1+\ri\alpha) B_{xx} + \mu(t) B$, evolving from delta function initial condition. We obtain a leading order prediction for the front location by solving the threshold equation for the amplitude $|B|$. In more detail, the linear solution for delta function initial condition is given by
$$
B(x,t) = \frac{1}{\sqrt{4\pi(1+\ri\alpha)t}}\exp\left[ \int_0^t\mu(s) ds - \frac{x^2}{4(1+\ri\alpha )t} \right].
$$
Once again assuming $\mu_0 = 0$ and solving  $|B(x,t)| = u_\mathrm{th}$ for $x$ in terms of $t$ at leading order gives the interface prediction
\begin{align}
\sigma(t) = \sqrt{2\epsilon(1+\alpha^2) t^3}.
\end{align}
We then determine the local temporal oscillation frequency at the interface by evaluating the imaginary part of the solution at $x = \sigma(t)$ and computing the local phase 
\begin{align}
\phi_\epsilon(t)&:= \mathrm{Im} \log B(\sigma,t) = \frac{\alpha\sigma(t)^2}{4(1+\alpha^2)t}\notag\\
&= \frac{\alpha \epsilon t^2}{2}. 
\end{align}
The local frequency at the front interface is then
\begin{align}
 \omega_\epsilon(t):= \phi_\epsilon'(t) = \alpha \epsilon t.
\end{align}

The local spatial wavenumber can then be obtained using the frozen coefficient nonlinear dispersion relation posed in a co-moving frame $z = x - ct$, with frozen $c=c(t):=\sigma'(t)$. That is, fixing $\mu,c>0$ in the full nonlinear equation \eqref{e:cgl0}, and inserting the wave-train ansatz $A = r \re^{\ri(k(x - ct) + \omega t)}$, one obtains
\begin{align}\label{e:om0}
\omega = (\gamma - \alpha) k^2 + c k - \gamma \mu,
\end{align}
which has the solutions
$$
k =\left[ -c \pm \sqrt{c^2 + 4(\gamma - \alpha)(\omega + \gamma\mu)} \right]/(2(\gamma - \alpha)),\quad \alpha\neq\gamma.
$$
Then, for each $t>0$, we set $\omega = \omega_\epsilon(t),\, c = c(t)=\sigma'(t),\,$ and $\mu = \mu(t)$, and assume $\gamma - \alpha\neq0$ in this formula to obtain the local wavenumber prediction
\begin{align}
k_\epsilon(t)&= \frac{\sqrt{\mu(t)}}{2\sqrt{2}(\gamma - \alpha)} \left( \sqrt{9+\alpha^2 + 8\gamma^2}-3\sqrt{1+\alpha^2} \right).\label{eq:k_eps}
\end{align}
where we choose signs of the square roots such that $\mu - k^2 >0.$

\begin{figure}[h]
    \centering
   \hspace{-0.25in} \includegraphics[width=0.33\linewidth,trim={0 0 1.75cm 0},clip]{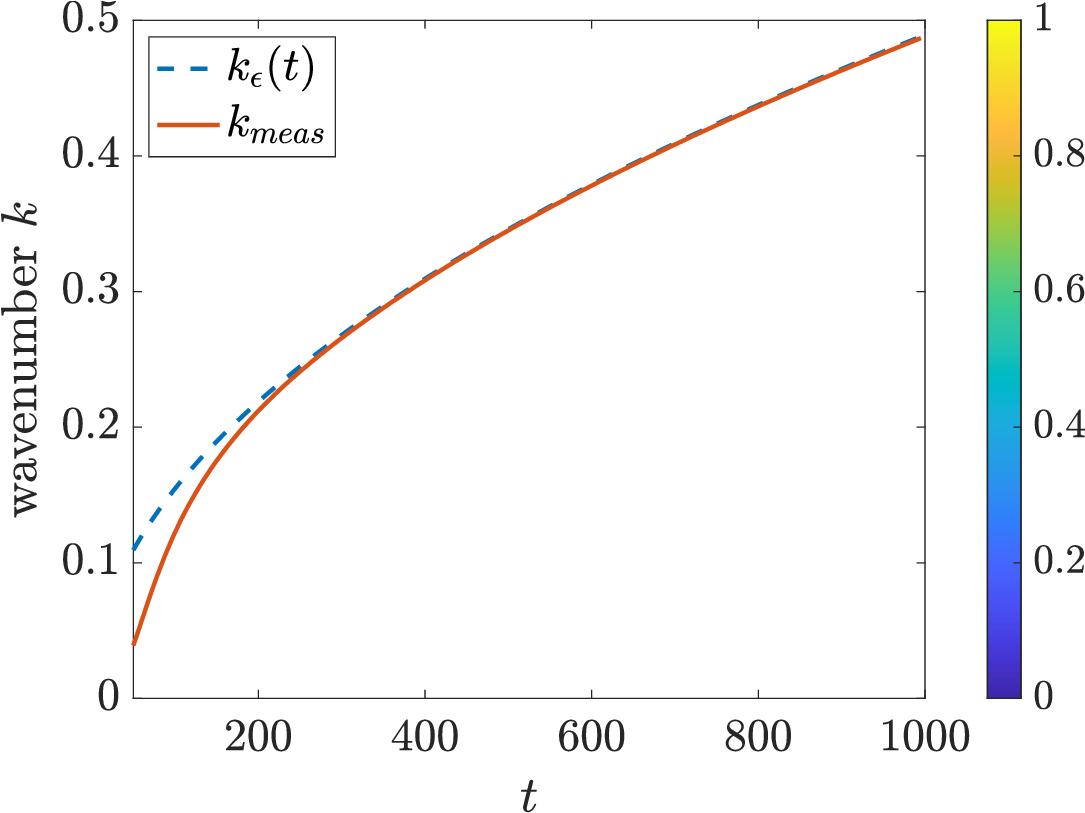}
        \includegraphics[width=0.33\linewidth,trim={0 0 1.7cm 0},clip]{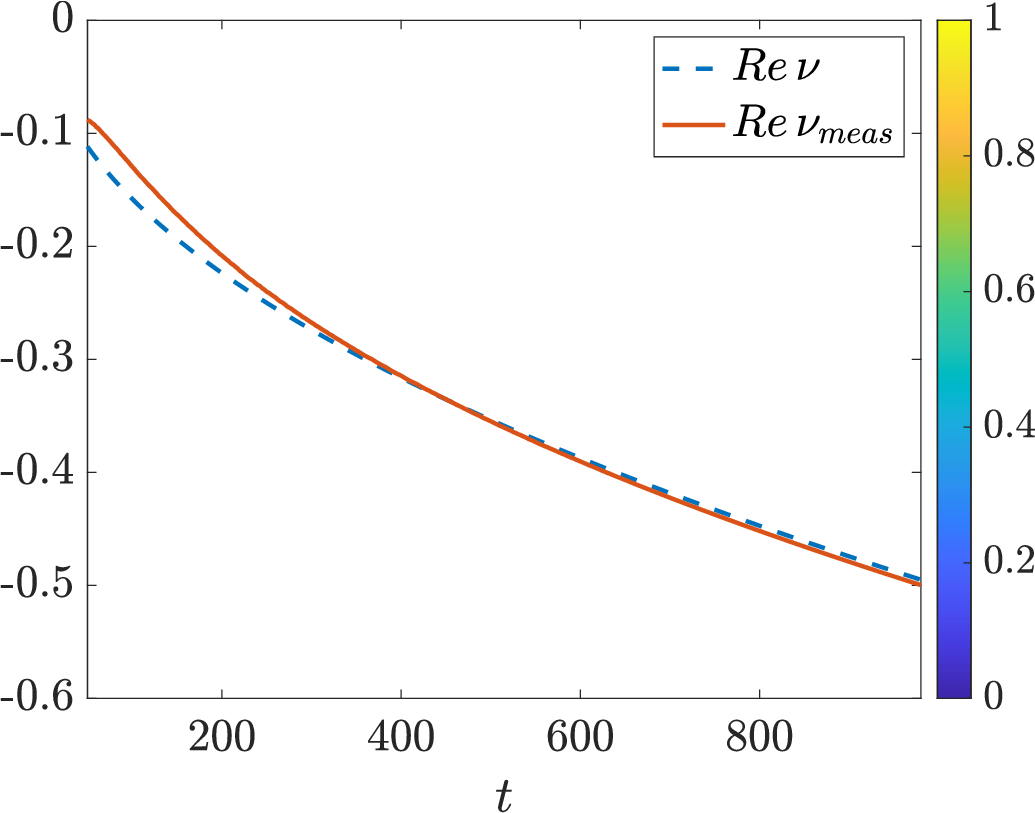}
            \includegraphics[width=0.35\linewidth,trim={0 0 1.7cm 0},clip]{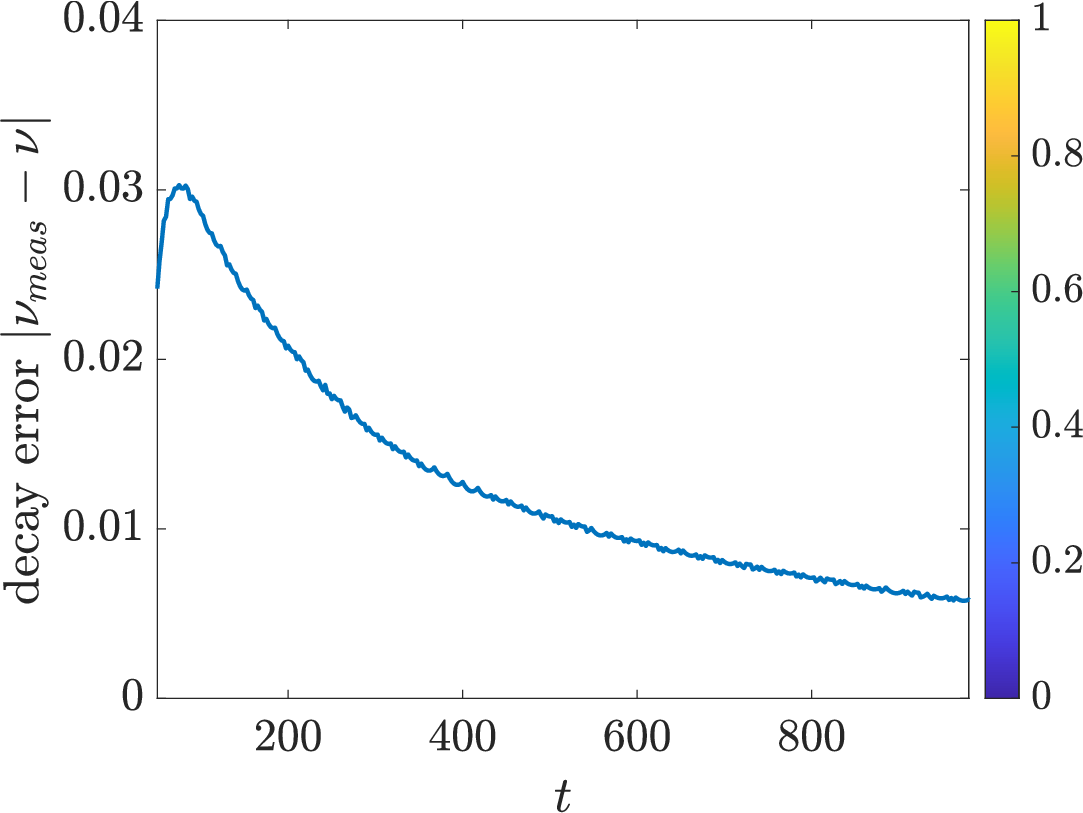} \hspace{-0.25in}
    \caption{Left: Plot of wavenumber measurement $k_\mathrm{meas} = \mathrm{Im}\, A_x/A$ (solid orange) measured just behind the front interface $x = \sigma(t)$ compared with prediction $k_\epsilon(t)$.  Center: Comparison of the decay rate measurement $\mathrm{Re}\, \nu_{meas} = \mathrm{Re}\, A_x/A$ (solid orange), measured just ahead of the front interface, against the prediction  $\nu$ given in \eqref{e:nu_cgl}; Right: Plot of the absolute error $|\nu_{meas} - \nu|$ between measurement and prediction.  }
    \label{f:cgl_lewv}
\end{figure}
Figure \ref{f:cgl_lewv} depicts a comparison of the leading-order prediction, $k_\epsilon$, with the numerical wavenumber measured just behind the front interface. Here, the wavenumber is computed using $k_\mathrm{meas} = \mathrm{Im}\, A_x/A$ for an $x$ value just behind the front interface.  Note that it takes some time for the front and patterned state to fully establish itself, and we find the measured wavenumber converges close to the prediction by time $t\approx 200$. 

The spatial decay profile of the front can also be obtained from the linear solution $B(x,t)$ as done for FKPP above. Namely, one transforms into a co-moving frame $z = x - \sigma(t)$, and expands the argument of the exponential in $B(z+ \sigma(t))$. The factor of the argument of the exponential which is linear in $z$ gives the exponential decay profile $e^{\nu(t) z}$ with 
\begin{equation}\label{e:nu_cgl}
    \nu(t) = -\frac{\sqrt{2\epsilon(1+\alpha^2) t}}{2(1+i\alpha)}.
\end{equation}
Figure \ref{f:cgl_lewv} (center and right) shows that, after an initial transient, the spatial decay envelope is accurately predicted by $\nu(t).$ The decay rate is numerically measured by evaluating $A_x/A$ just ahead of the front interface. The center plot depicts the real part of the measured and predicted decay rates while the right plot depicts the absolute difference between them.

\subsection{Bulk wavenumber modulation equation}\label{ss:cgl_bulk}

To derive a prediction for wavenumber behaviors in the bulk, we use the modulational approach of \cite{howard77} which derives an inviscid Burger's equation to predict the leading-order approximate behavior of the slowly-varying amplitude and phase modulations of a plane wave solution. This approach uses a uniform scaling in space and time $(X,T) = (\delta x, \delta t)$ with $0<\delta \ll1$ and gives validity on the time scale $t\sim 1/\delta.$  We remark that a parabolic scaling $(X,T) = (\delta x, \delta^2 t)$ would give a viscous Burger's modulation equation and larger interval of validity in time, $t\sim 1/\delta^2$. Since the inviscid modulational analysis yields accurate predictions, we do not pursue the latter here; see \cite{doelman2009modulation} for more discussion on both. 


We once again consider frozen coefficient plane wave solutions  $A(z,t)=r\exp(i(kz+\omega t))$ in the co-moving frame $z = x - ct$ where $c$ and $\mu$ are frozen so that $\omega$ and $k$ satisfy \eqref{e:om0} and $r = \sqrt{\mu - k^2}$. Taking into account that $\mu$ and $c$ are time dependent, we then define
\begin{equation}\label{e:omt}
\omega(k,t) = (\gamma - \alpha) k^2 + c(t) k - \gamma \mu(t),\qquad r(k,t) = \sqrt{\mu(t)-k^2}.
\end{equation}
Since solutions initially begin with zero wavenumber, we modulate the $k=0$ mode with the ansatz
 \begin{equation}
    A(z,t)=(r(0,t)+\tilde{r}(z,t))\exp(i(\omega(0,t)+\phi(z,t))).
\end{equation}

Inserting this into the complex Ginzburg-Landau equation gives a system of two coupled PDEs for $\tilde{r}$ and $\phi$.  We define the local wavenumber as $\psi=\phi_z,$ and define the slow variables $(Z,T)=(\delta z,\delta t)$, and $(W,q)(Z,T)=(\tilde{r},\psi)(z,t)$, for $\delta>0$ small enough.  Removing common factors of $\delta$ and then looking at only leading-order terms in $\delta$ gives an algebraic equation which can be solved for $W$.  Substituting this into the remaining equation gives the inviscid Burger's equation for $q$. We represent the front phenomenon in question by imposing a time-dependent Dirichlet boundary condition fixing the wavenumber to be the leading-edge prediction $k_\epsilon$ given in \eqref{eq:k_eps} above.  In sum, we obtain the following modulation equation for the wavenumber $q(Z,T)$ (see Chapter 6 of \cite{doelman2009modulation}), posed on the left half-plane with time-varying boundary condition on the right boundary, and zero initial data $q(Z,0) = 0$:
\begin{align}\label{e:qt}
    q_T-\partial_Z\omega(q,T/\delta)&=0, \qquad\qquad\qquad  Z<0 \notag\\
    q &= k_\epsilon(T/\delta),\qquad Z =0
\end{align}
Note that $\omega$, defined in \eqref{e:omt}, is a function of $q$ and $t$ and $\partial_Z \omega(q,T/\delta) = \omega_q(q,T/\delta)q_Z$. As \eqref{e:qt} is a first-order nonlinear equation, it can be solved explicitly using the method of characteristics. Non-zero characteristics emanate from the right boundary at $Z = 0$. This causes characteristic lines of varying slope and hence a slowly-mixing wavenumber profile in the bulk. 

In more detail, \eqref{e:qt} gives the following system of characteristic equations 
\begin{align}
    \frac{dq}{ds}&=0\\
    \frac{dZ}{ds}&=-\omega_q(q,T/\delta)\\
    \frac{dT}{ds}&=1,
\end{align}
which can be explicitly solved
\begin{align}
    q(s)&=q_0=k_\epsilon(T_0/\delta),\label{e:qs}\\
    T(s)&=s+T_0,\\
    Z(T)&=-2(\gamma-\alpha)C_{\alpha,\gamma,\epsilon}\sqrt{T_0/\delta}(T-T_0)-\sqrt{2\epsilon(1+\alpha^2)T^3/\delta}+\sqrt{2\epsilon(1+\alpha^2)T_0^3/\delta},\label{e:ZT}
\end{align}
where\begin{align}
    C_{\alpha,\gamma,\epsilon}&=\sqrt{\epsilon}\left(\sqrt{9+\alpha^2+8\gamma^2}-3\sqrt{1+\alpha^2}\right)/\left(2\sqrt{2}(\gamma-\alpha)\right).
\end{align}
From this one can solve for $T_0 = T(0)$ in terms of $(Z,T)$, and insert into \eqref{e:qs} to obtain an explicit formula for the wavenumber in terms of the slow variables, after which one can translate back to the original variables $(x,t)$ for comparison with the measured wavenumber. 

First, we note that since $z=x-\sigma(t)$ with $\sigma(t)=\sqrt{2\epsilon(1+\alpha^2)t^3}$, we can write $Z=\delta z=\delta x-\delta\sigma(t)=X-\delta\sigma(T/\delta)$ in \eqref{e:ZT}
to, after some algebraic manipulation, obtain the equation
\begin{equation}
  0 =   h(T_0;T,X):=\sqrt{T_0}^3-\frac{2(\gamma-\alpha)C_{\alpha,\gamma,\epsilon}T}{2(\gamma-\alpha)C_{\alpha,\gamma,\epsilon}+\sqrt{2(1+\alpha^2)\epsilon}}\sqrt{T_0}-\frac{\sqrt{\delta}X}{2(\gamma-\alpha)C_{\alpha,\gamma,\epsilon}+\sqrt{2(1+\alpha^2)\epsilon}}.
\end{equation}
Defining coefficients $b_0(X,T)$ and $b_1(X,T)$ to be such that $h(T_0;T,X) = T_0^{3/2}+b_1 T_0^{1/2} + b_0$, a cubic polynomial in the variable $\sqrt{T_0}$, we obtain only one real root with an explicit formula. Squaring this formula yields
$$
T_0(X,T) = \left( \frac{(2/3)^{1/3} b_1 }{\Phi} - \frac{\Phi}{2^{1/3}3^{2/3}}\right)^2,\qquad\text{ for } \quad 
\Phi(X,T) = \left( -9b_0 + \sqrt{81 b_0^2 + 4 b_1^3} \right)^{1/3}.$$  
We insert this formula for $T_0$ into the wavenumber characteristic solution
$
    q(T)=q_0=k_{\epsilon}(T_0/\delta),
$
for $k_\epsilon(t)$ as given in equation \eqref{eq:k_eps}, obtaining the following formula for the wavenumber in terms of $X$ and $T$
\begin{equation}
    q(X,T)=k_\epsilon(T_0(X,T)/\delta).
\end{equation}
Unwinding the change of variables $(X,T) = \delta(x,t)$, and defining $t_0(x,t) := T_0(X,T)/\delta$ we can then obtain 
\begin{equation}\label{e:qt0}
    \psi(x,t) = q(X,T)=k_\epsilon(t_0(x,t)).
\end{equation}
See also Sec. 1 - 3 of \cite{howard77} for a similar derivation and discussion of the relation between the scaled and unscaled phase and wavenumbers. As noted there, we remark that, due to the uniform scaling $(X,T) = \delta(x,t)$ and the local wavenumber $q(X,T) = \psi(x,t)$, one finds that the solution $\psi(x,t) = k_\epsilon(t_0(x,t))$ can be obtained at leading order directly by solving the equivalent Burger's equation in the fast co-moving frame variables $(z,t)$
\begin{equation}
\psi_t = \partial_z\omega(\psi,t),\,\, z<0, t>0; \qquad \psi(0,t) = k_\epsilon(t),\,\, \psi(z,0) = 0
\end{equation}
using the method of characteristics, and then translating back to stationary coordinates $(x,t)$.



After setting the wavenumber to be zero ahead of the front, we obtain a prediction for the wavenumber dynamics at both the front interface as well as in the wake.  Figure \ref{f:pred_wavenumber} gives space-time diagrams of the measured wavenumber, $k_{meas}=\mathrm{Im}(A_x/A)$ (left) and predicted wavenumber, $k_\epsilon(t_0(x,t)),$ (center), as well as a plot with both laid on top of each other plotted against $x$ for several positive times (right). We find good agreement between the prediction and the measured wavenumber.  

\begin{figure}[h]
    \centering
    \hspace{-0.2in}
        \includegraphics[width=0.36\textwidth]{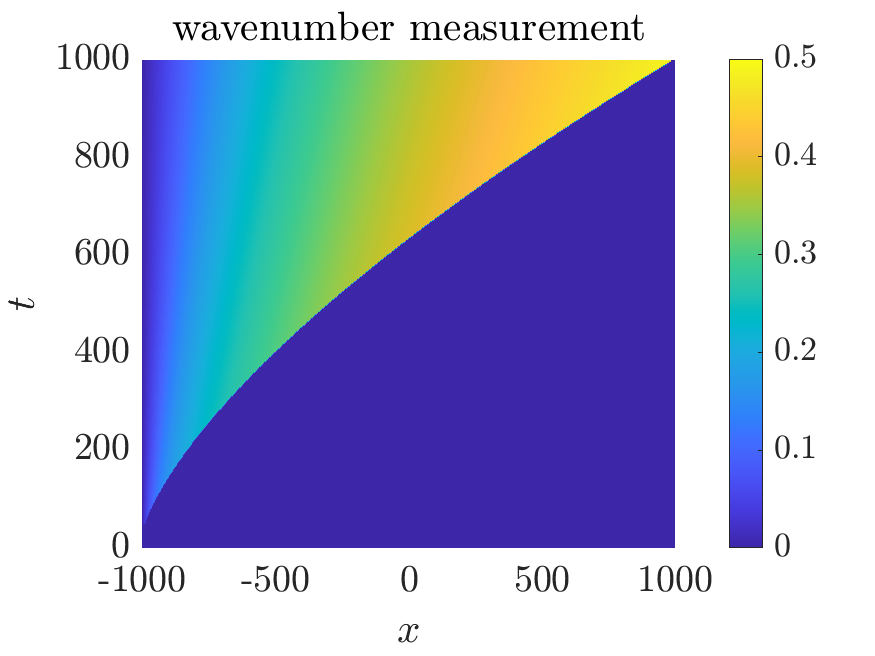} \hspace{-0.2in}
		\includegraphics[width=0.35\textwidth]{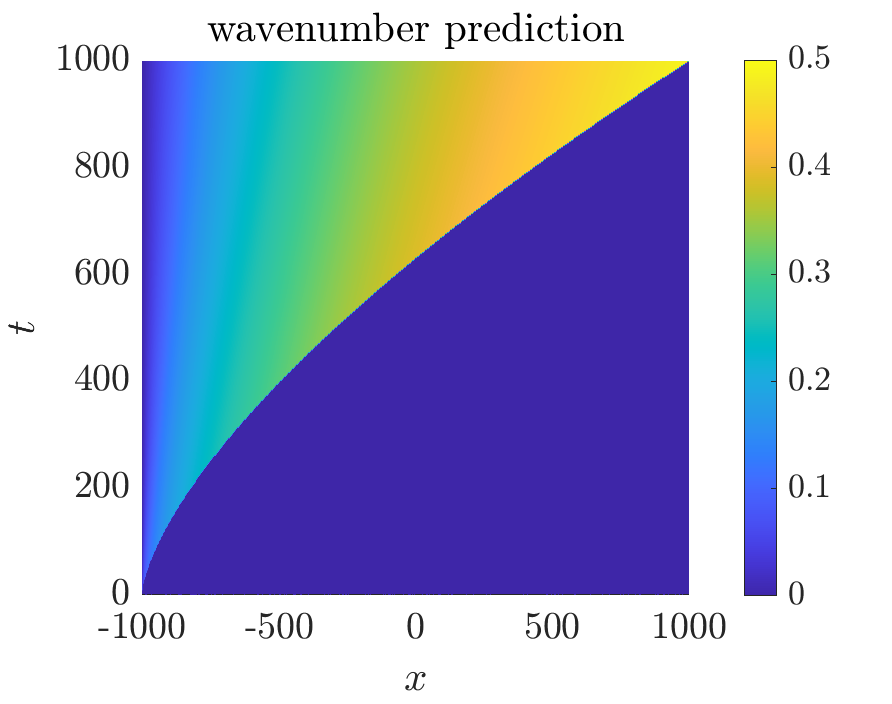}\hspace{-0.2in}
		\includegraphics[width=0.35\linewidth]{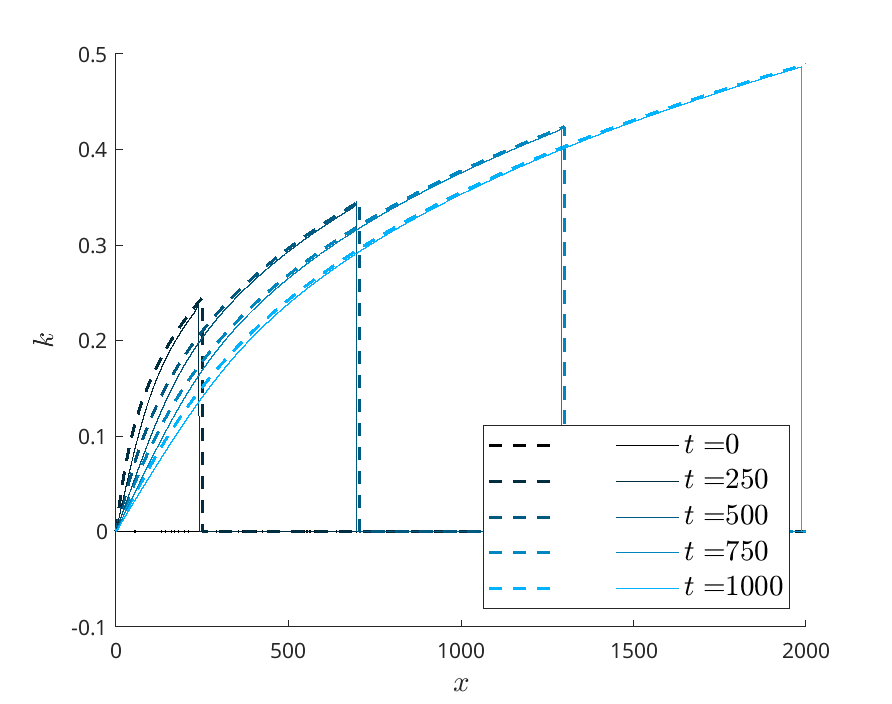}\hspace{-0.2in}
		   \caption{Comparison of the measured wavenumber $k_{meas}(x,t)=\mathrm{Im}(A_x(x,t)/A(x,t))$ and the prediction $k_\epsilon(t_0(x,t)),$ obtained via an inviscid Burger's modulational analysis; Left and center plots give spacetime diagrams (with color denoting the wavenumber) of this measurement and prediction respectively; Right plot overlays measurement (solid) and prediction (dashed), plotted against $x$ for various times $t>0$ (labeled in the figure, and varying in color); wavenumber set to zero ahead of the front interface $x_f$. }
    \label{f:pred_wavenumber}
\end{figure}

%

\subsection{Delayed Hopf bifurcation in CGL}\label{s:dhb-cgl}
As to be expected from \cite{kaper2018delayed,goh2022delayed}, the linearized space-time memory curve approach for FKPP in Section \ref{s:dhb-k} above also gives accurate predictions for the slowly varying CGL equation \eqref{e:cgl0} with $\mu = \epsilon t + \mu_0$ with $\mu_0<0$. Following the aforementioned references, we change variables to consider the $\mu$ variable as time, and consider
\begin{align}\label{e:cgl_m}
\epsilon A_\mu = (1+\ri\alpha)A_{xx} + \mu A - (1+\ri\gamma)A|A|^2.
\end{align}
Linearizing about the base state $A = 0$, we then obtain the following linear equation
\begin{equation}\label{e:lcgl_m}
\epsilon B_\mu = (1+\ri\alpha)B_{xx} + \mu B.
\end{equation}
which can once again be solved explicitly. For initial condition $B(x,0) = \re^{-x^2}$, one obtains the solution
\begin{equation}\label{e:lcgl_e}
B(x,t) = \left(4\pi(1+\ri\alpha)(\mu-\mu_0) + \epsilon \right)^{-1/2} \exp\left[ \frac{1}{2\epsilon}\left(\mu^2 - \mu_0^2 \right) -\frac{\epsilon x^2}{4(1+\ri\alpha)(\mu-\mu_0)+\epsilon} \right].
\end{equation}
With this solution, we set $|B(x,\mu)| = B_\mathrm{th}$ and take the logarithm of both sides to obtain an implicit equation for the space-time memory curve,
\begin{align}
\log(B_\mathrm{th}) &=\frac{1}{2}\log\epsilon -\frac{1}{4}\log\left( (\epsilon + 4(\mu_{mc}(x) - \mu_0))^2 + (4\alpha(\mu_{mc}(x)-\mu_0))^2 \right) \notag\\
&\qquad\qquad-\epsilon x^2 \frac{\epsilon + 4t}{(\epsilon + 4(\mu_{mc}(x) - \mu_0))^2 + (4\alpha(\mu_{mc}(x)-\mu_0))^2}.\label{e:cgl_mc}
\end{align} 
This implicit equation is solved numerically and plotted on top of the numerical solution of \eqref{e:cgl_m} in Figure \ref{f:cgl_dhb}. 
\begin{figure}[h!]    
    \centering
\includegraphics[width=.45\textwidth]{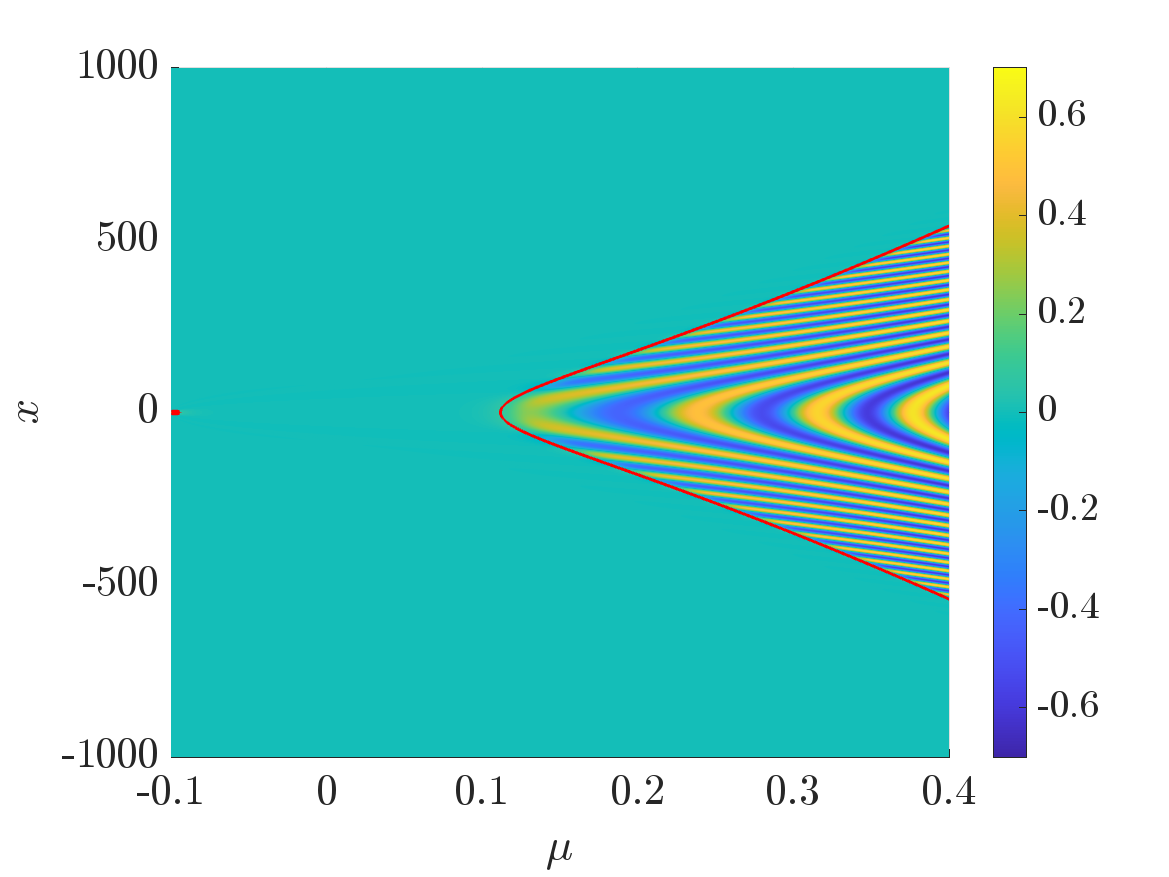}\,
\includegraphics[width=.45\textwidth]{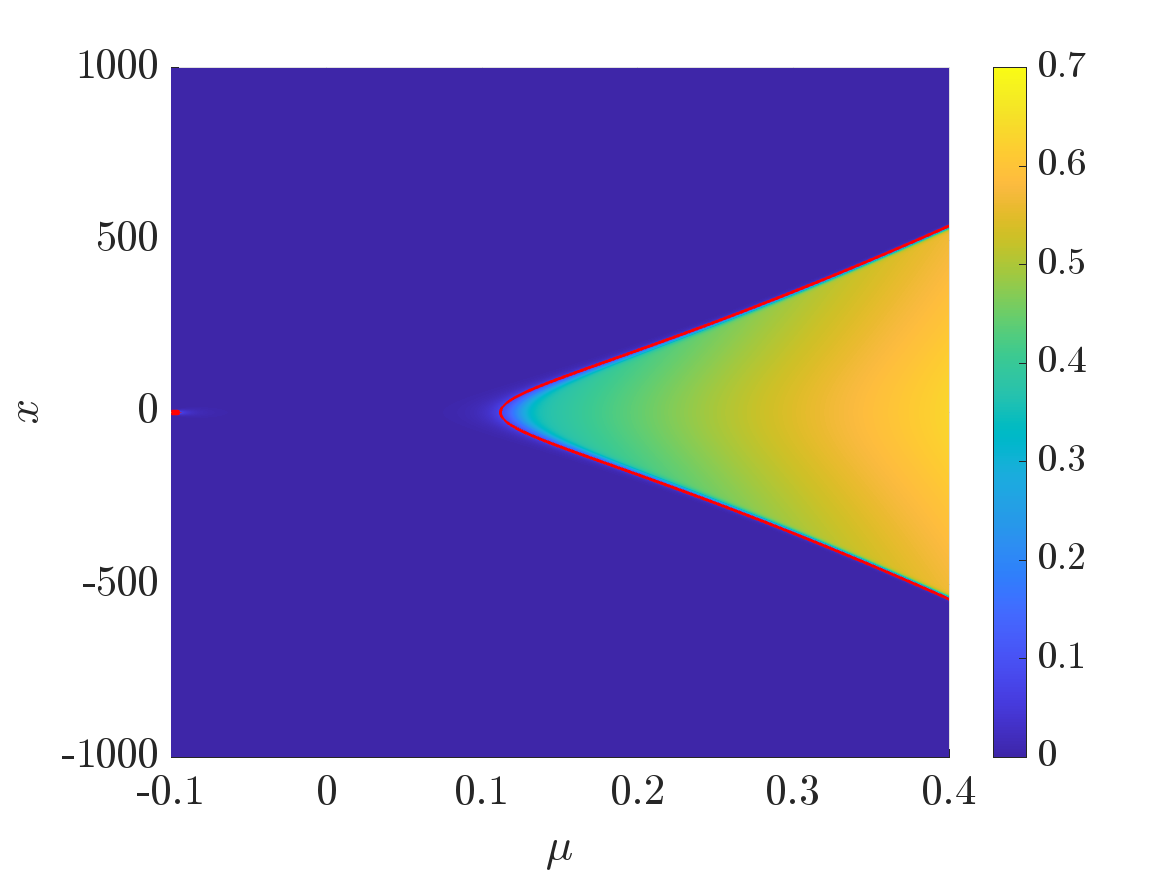}
    \caption{Depiction of delayed Hopf bifurcation and invasion in \eqref{e:cgl_m} with $\epsilon = 0.001,\mu_0 = -0.1,\alpha = 1,\gamma = 0.3, v_\mathrm{th} = 0.2$ and $A(x,0) = \re^{-x^2}$. Domain length $x\in[-1000,1000]$ with $dx = 0.1$ and $dt = 0.0025$. The red curve denotes the linear predicted space-time memory curve $\mu_h(x)$. Left plot depicts a space-time diagram of $\mathrm{Re} \, A(x,\mu)$ and the right plot gives the same for the amplitude $|A(x,\mu)|$.  }\label{f:cgl_dhb}
\end{figure}
We find good agreement between our linear prediction $\mu_\mathrm{mc}(x)$ and the spatially dependent onset of large amplitude oscillations. 


\section{Rigorous spreading in FKPP fronts}\label{s:rig}
In this section, we return to the FKPP equation \eqref{e:fkpp0} with time-dependent, increasing, unbounded, and slowly-varying parameter $\mu = \epsilon t+\mu_0$, and standard nonlinearity $f(u) = u-u^2$. We give a rigorous characterization of the spreading process in the Cauchy problem with step function initial data $u(x,0) = h_0(-x)$ with $h_0(x)$ the standard Heaviside function. Note, since $\mu(t)<0$ for $t\ll-1$, there do not exist generalized transition waves as defined in \cite{berestycki2007generalized} et. al. We show that for $t$ sufficiently large,the stable state $u\equiv 1$ invades $u\equiv0$ with the linearly predicted front position $\sigma$ derived in \eqref{e:sig} above. 
Our main result is as follows:
\begin{Theorem}\label{t:sp}
Let $\mu_0=0$ and $\epsilon>0$, and $u(x,t)$ be the solution of \eqref{e:fkpp0} with $u(x,0) = h_0(-x).$ Then, for all $\eta>0$
\begin{align}
\lim_{t\rightarrow+\infty}\sup_{x\geq \sigma(t) + \eta t}u(x,t) = 0.\label{e:sp1}\\
\lim_{t\rightarrow+\infty}\inf_{x\leq \sigma(t) - \eta t}u(x,t) = 1.\label{e:sp2}
\end{align}
\end{Theorem}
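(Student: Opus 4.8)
The plan is to run a comparison-principle argument, using the linearization as a super-solution for \eqref{e:sp1} and constructing a frozen-coefficient traveling-wave sub-solution for \eqref{e:sp2}. First I would record the standing reductions. Since $\mu_0=0$ we have $\mu(t)=\epsilon t\ge 0$ for $t\ge 0$, and $f(0)=f(1)=0$ with $f\ge 0$ and $f(u)\le u$ on $[0,1]$; hence $0\le u\le 1$ is invariant and the linear term dominates the nonlinear one from above. Because the datum $h_0(-x)$ is non-increasing in $x$, the solution $u(\cdot,t)$ stays non-increasing in $x$ for all $t$, so $\sup_{x\ge \sigma+\eta t}u=u(\sigma+\eta t,t)$ and $\inf_{x\le\sigma-\eta t}u=u(\sigma-\eta t,t)$. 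Both assertions thus reduce to pointwise statements at the moving point $x=\sigma(t)\pm\eta t$.

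For the upper bound \eqref{e:sp1}, I take the solution $v$ of the linear equation \eqref{e:lfkpp} with $v(\cdot,0)=h_0(-\cdot)$; since $f(u)\le u$, comparison gives $u\le v$, and the explicit representation yields $v(x,t)=\tfrac12 e^{\int_0^t\mu}\,\mathrm{erfc}\!\left(x/\sqrt{4t}\right)$. Using $\mathrm{erfc}(s)\le e^{-s^2}$ and the identity $\int_0^t\mu(s)\,ds=\sigma(t)^2/(4t)$ coming from \eqref{e:sig}, I get $v(x,t)\le \tfrac12\exp\!\big[(\sigma(t)^2-x^2)/(4t)\big]$. Evaluating at $x=\sigma(t)+\eta t$ makes the exponent $\le -\eta^2 t/4\to-\infty$, which proves \eqref{e:sp1}.

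The lower bound \eqref{e:sp2} is the crux and is where the novel sub-solution enters. I would use the frozen-coefficient front $\Phi_{\mu(t)}$ solving \eqref{e:tw0} at the instantaneous speed $c(\mu(t))=\sigma'(t)$ from \eqref{e:ct}; the scaling $\Phi_m(z)=\Psi(\sqrt m\,z)$ (with $\Psi$ the unique front at $\mu=1$) exhibits the self-similar steepening. Setting $\underline u(x,t)=\Phi_{\mu(t)}(x-\xi(t))$ with a lagged front position $\xi'(t)=\sigma'(t)-\beta$ for a constant $\beta>0$, the traveling-wave identity (with $c(\mu(t))=\sigma'(t)$) makes the $\Phi'$-terms cancel and leaves
\[
\mathcal L[\underline u]:=\underline u_t-\underline u_{xx}-\mu f(\underline u)=\beta\,\Phi'_{\mu(t)}(z)+\epsilon\,\partial_m\Phi_m\big|_{m=\mu(t)}=\Psi'(\zeta)\left(\beta\sqrt{\mu(t)}+\frac{\zeta}{2t}\right),
\]
where $z=x-\xi(t)$ and $\zeta=\sqrt{\mu(t)}\,z$. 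Since $\Psi'<0$, this is $\le 0$ exactly on the near-front window $\zeta\ge-2\beta\sqrt\epsilon\,t^{3/2}$, i.e. $z\ge-2\beta t$. Choosing $\beta\in(\eta/3,\eta)$ places the target point $x=\sigma(t)-\eta t$, for which $z=-(\eta-\beta)t>-2\beta t$, inside this window, with value $\underline u=\Psi\big(-(\eta-\beta)\sqrt\epsilon\,t^{3/2}\big)\to 1$.

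The main obstacle is that the same identity gives $\mathcal L[\underline u]>0$ far behind the front ($z<-2\beta t$), where $\Phi_{\mu(t)}\to 1$: there the frozen wave is a super-solution, and no max/min gluing can repair this, since nothing that is itself a valid sub-solution can dominate a profile tending to $1$. I would therefore truncate $\underline u$ into a bump, replacing it for $z<-2\beta t$ by a valid sub-solution that decreases in $x$ down to $0$ (e.g. a linear ramp capped at $0$), chosen so that the junction at the plateau value $\Psi(-2\beta\sqrt\epsilon\,t^{3/2})$ is a convex (max-type) corner; the sub-solution inequality on the ramp holds for large $t$ because the plateau's growth rate and the ramp's drift are $o(\mu(t))$ while the restoring term $\mu(t)f$ blows up. After arranging the initial ordering $\underline u(\cdot,T_0)\le u(\cdot,T_0)$ at some large $T_0$ (seeding the bump in the region $x<0$, where $h_0(-x)=1$, and letting it spread), the comparison principle gives $u(\sigma-\eta t,t)\ge\underline u(\sigma-\eta t,t)\to 1$, and monotonicity upgrades this to \eqref{e:sp2}. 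Verifying global validity of this truncated, accelerating sub-solution—the far-behind ramp with its corner, together with the initial ordering—is the technical heart of the argument.
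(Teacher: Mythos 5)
Your supersolution argument for \eqref{e:sp1} is essentially the paper's: the linear evolution of the same Heaviside datum dominates $u$ because $f(u)\le u$, and the explicit error-function formula kills the solution along $x=\sigma(t)+\eta t$. That half is fine (modulo a remark that the comparison principle needs a word of justification because the zeroth-order coefficient $\mu(t)$ is unbounded in $t$; the paper handles this with an integrating factor, though working on finite time intervals $[0,T]$ also suffices).

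The lower bound is where your route genuinely diverges from the paper's, and it is also where there is a real gap. Your computation of the residual of the lagged frozen front $\Phi_{\mu(t)}(x-\xi(t))$ is correct, and you correctly identify that it is a subsolution only on the window $z\ge -2\beta t$ and a supersolution behind it. But the proposed repair --- truncating behind the front by a piece that decreases down to $0$, glued ``so that the junction \ldots is a convex (max-type) corner'' --- is geometrically impossible. At the junction the left piece rises to the plateau value (positive slope) while $\Phi$ falls away from it (negative slope), so the derivative jumps \emph{down}: the glued profile is locally the \emph{minimum} of the two pieces, a concave peak, and the minimum of two subsolutions is not a subsolution (a smooth test function touching the peak from above can be made arbitrarily concave, so the differential inequality fails there). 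To make the corner convex you would need the left piece to be steeper-decreasing than $\Phi$ at the junction, which forces it above the plateau to the left and defeats the truncation. Nor can you retreat to comparison on the moving domain $\{z\ge -2\beta t\}$ with boundary data equal to the plateau value $\Psi(-2\beta\sqrt{\epsilon}\,t^{3/2})\to 1$: verifying that $u$ exceeds a quantity tending to $1$ along a ray $x=\sigma(t)-O(t)$ is precisely the statement \eqref{e:sp2} you are trying to prove, so the argument is circular. The paper sidesteps all of this by building a \emph{bump} subsolution from scratch, $\varphi=e^{\nu(t)z}-e^{\kappa(t)(z-\delta_2(t))}$ with $\kappa=(1+C_0)\nu$ and a $\mu^{-1/2}$-shift $\delta_2$, engineered so that its maximum value $\gamma$ is a constant strictly below $1$, \emph{independent of $t$ and $\mu$}; it is then extended by the constant $\gamma$ to the left of its critical point (a $C^1$ gluing, and a constant in $(0,1)$ is itself a subsolution), the bound $u\ge\gamma$ on the left region is obtained separately from locally uniform convergence of $u$ to $1$ together with monotonicity in $x$, and the initial ordering of the exponential tails is secured by a preliminary subsolution of the form $A(t)v$ --- a step your proposal also leaves unaddressed, since your bump carries an $e^{\nu(t)z}$ tail to $+\infty$ that must be placed below $u$ at time $T_0$. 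In short: your identification of the obstruction is accurate, but the truncation you propose does not produce a subsolution, and fixing it leads you back to something like the paper's fixed-height bump construction.
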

The proof of this theorem is given in the rest of this section. We first prove \eqref{e:sp1} by creating a supersolution using a solution of the linearized equation with the same heaviside initial data. We then use a difference of exponential profiles with time dependent decay rates to create a subsolution which establishes \eqref{e:sp2}. 

This result is by no means sharp, and is mostly meant to show how comparison principle methods can be applied in the unbounded parameter case, using exponentials with time dependent spatial decay rate. We note that similar spreading behavior will occur for small, bounded, and compactly supported initial data. Also, we expect results of this form will extend to more general monostable nonlinearities, e.g. $f(0) = f(1) = 0, f'(0)>0, f'(1)<0, f(u)\leq f'(0) u$ for $u\in (0,1).$ Further, we expect similar spreading behaviors to occur in other systems with monostable equilibrium configuration and an unbounded parameter which controls stability. 
\begin{Remark} 
 The results of Theorem \ref{t:sp}  also hold for $\mu_0>0$. The linearly predicted front position in this case is $\sigma_{lin}(t) = \left(4 t \int_0^{t}\mu(s)ds \right)^{1/2} = \sqrt{2\epsilon t^3}(1 + \frac{2\mu_0}{\epsilon t})^{1/2}$. One would still set the front position to be the leading order term $\sigma(t) = (2\epsilon t^3)^{1/2}$ and proceed as in the proof below. Any corrections due to the lower order terms enter in at $\mathcal{O}(t^{-1/2})$ and hence are contained within the lines $x = \sigma(t) \pm \eta t$ for any $\eta>0$  as $t\rightarrow+\infty$  
\end{Remark}

\subsection{Supersolution}\label{ss:supsol}
It will be helpful to define various quantities characterizing spreading in terms of $\mu$ such as the time-dependent speed and spatial decay rate. We have 
$
c(t) = 3\sqrt{\mu/2},\quad \nu(t) = -\sqrt{\mu/2}, 
$
so that $\nu^2 + c\nu + \mu = 0.$ Also, we define the nonlinear and linearized operators
\begin{align}
        N[u]&:=u_t-u_{xx}-\mu(t)f(u),\label{e:Nv}\\
Lv&=v_t-v_{xx}-\mu(t)v.        \label{e:Lv}
    \end{align}
We then obtain a supersolution in the following lemma.
\begin{Lemma}\label{l:sup}
Let $\bar\phi(x,t)$ be the solution of $0 = Lv$, defined in \eqref{e:Lv}, with initial condition $\bar\phi(x,0) = u(x,0)$, then $\bar\phi$ is a supersolution of \eqref{e:fkpp0}, that is $N[\bar\phi]\geq0$, and it satisfies 
\begin{equation}
\lim_{t\rightarrow+\infty}\sup_{x\geq \sigma(t) + \eta t} \bar\phi(x,t) = 0,
\end{equation}
 for any $\eta>0$.
\end{Lemma}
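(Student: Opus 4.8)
The plan is to establish the two claims of Lemma \ref{l:sup} separately: first that $\bar\phi$ is a supersolution, and second that it satisfies the stated decay ahead of the front. The supersolution property is the easier of the two and follows from comparing the linear and nonlinear operators. Since $\bar\phi$ solves $L\bar\phi = 0$, we compute $N[\bar\phi] = \bar\phi_t - \bar\phi_{xx} - \mu(t)f(\bar\phi) = \mu(t)\bar\phi - \mu(t)f(\bar\phi) = \mu(t)\left(\bar\phi - f(\bar\phi)\right)$. With $f(u) = u - u^2$ we get $\bar\phi - f(\bar\phi) = \bar\phi^2 \geq 0$, and since $\mu(t) = \epsilon t \geq 0$ for $t \geq 0$ (recall $\mu_0 = 0$), we conclude $N[\bar\phi] = \mu(t)\bar\phi^2 \geq 0$. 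Thus $\bar\phi$ is a supersolution of \eqref{e:fkpp0}. I would note that this requires $\mu(t)\geq 0$, which holds precisely because we are in the $\mu_0 = 0$ regime and restricting to $t\geq 0$.

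For the decay statement, the idea is to use the explicit solution formula for the linearized equation. By the integrating-factor computation from Section \ref{s:kpp-ph}, with Heaviside initial data $\bar\phi(x,0) = h_0(-x)$, the solution is
\begin{equation}
\bar\phi(x,t) = \frac{\exp\left[\int_0^t \mu(s)\,ds\right]}{\sqrt{4\pi t}} \int_{-\infty}^0 \re^{-\frac{(x-y)^2}{4t}}\,dy = \exp\left[\int_0^t \mu(s)\,ds\right]\cdot \frac{1}{2}\,\mathrm{erfc}\!\left(\frac{x}{\sqrt{4t}}\right),
\end{equation}
where $\int_0^t \mu(s)\,ds = \epsilon t^2/2$. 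The plan is then to bound $\bar\phi$ from above on the region $x \geq \sigma(t) + \eta t$ and show the bound tends to zero uniformly. I would use the standard tail estimate $\mathrm{erfc}(\xi) \leq \re^{-\xi^2}$ for $\xi > 0$, giving $\bar\phi(x,t) \leq \exp\left[\epsilon t^2/2 - x^2/(4t)\right]$. On the region $x \geq \sigma(t) + \eta t$, monotonicity of the Gaussian tail means the supremum is attained at $x = \sigma(t) + \eta t$, so it suffices to evaluate the exponent there. Since $\sigma(t)^2 = 2\epsilon t^3$, we have $\sigma(t)^2/(4t) = \epsilon t^2/2$, which exactly cancels the growing factor $\epsilon t^2/2$; the remaining contribution comes from the cross term and the $\eta^2 t^2$ term.

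Carrying out the exponent at $x = \sigma(t) + \eta t$: the exponent is $\frac{\epsilon t^2}{2} - \frac{(\sigma(t)+\eta t)^2}{4t} = \frac{\epsilon t^2}{2} - \frac{\sigma(t)^2}{4t} - \frac{\sigma(t)\eta t}{2t} - \frac{\eta^2 t^2}{4t} = -\frac{\eta\,\sigma(t)}{2} - \frac{\eta^2 t}{4}$. Both terms are negative and diverge to $-\infty$ as $t \to +\infty$ (indeed $\sigma(t) = \sqrt{2\epsilon t^3} \to \infty$), so the supremum bound decays to zero. I would conclude that
\begin{equation}
\sup_{x \geq \sigma(t)+\eta t} \bar\phi(x,t) \leq \exp\left[-\frac{\eta\,\sigma(t)}{2} - \frac{\eta^2 t}{4}\right] \xrightarrow{t\to+\infty} 0,
\end{equation}
which establishes the claim. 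The main obstacle, such as it is, is purely bookkeeping: ensuring the $\mathrm{erfc}$ tail bound is applied only where its argument is positive (which holds for $t$ large since $\sigma(t)+\eta t > 0$) and verifying the exact cancellation $\sigma(t)^2/(4t) = \epsilon t^2/2$ that makes the leading exponential growth disappear. This cancellation is precisely the design principle behind the definition of $\sigma(t)$ in \eqref{e:sig}, so it is expected rather than surprising; the genuine analytic content is minimal, and the lemma is essentially a clean consequence of the explicit linear solution.
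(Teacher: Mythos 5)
Your proposal is correct and follows essentially the same route as the paper: the supersolution property via $N[\bar\phi]=\mu(t)\bar\phi^{2}\ge 0$ (the paper phrases this as $f(u)\le f'(0)u$), followed by the explicit $\mathrm{erfc}$ formula for the linear solution with Heaviside data. Your Gaussian tail bound $\mathrm{erfc}(\xi)\le \re^{-\xi^{2}}$ and the exact cancellation $\sigma(t)^{2}/(4t)=\epsilon t^{2}/2$ simply carry out in detail the step the paper summarizes as ``analyzing the level sets of this function.''
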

\begin{proof}
We first observe that $f(u)\leq f'(0) u$ for $u\geq0$ and hence $N[\bar\phi]\geq L\bar\phi = 0$, and thus $\bar\phi$ is a supersolution of the full nonlinear equation for all $t\geq0$.  The linearized equation can be solved explicitly for heaviside initial data obtaining
$$
 \bar\phi(x,t)=\frac{\exp(\epsilon t^2/2)}{2}\left[1-\mathrm{erf}\left(\frac{x+\sigma(t)}{2\sqrt{t}}\right)\right],
$$
which becomes positive for all $t>0$. Analyzing the level sets of this function yields the result. 
\end{proof}

Using this generalized supersolution, one can also obtain an upper-bound on the asymptotic tail of the front as $x\rightarrow+\infty$ by using the expansion of the (complementary) error function to find 
$$
\bar u(x,t)\sim e^{\nu(t)(x-\sigma(t))},\qquad \text{ for } x\rightarrow +\infty,\,\, t>0,
$$
where we recall that $\nu$ satisfies $\nu^2 + c\nu + \mu = 0.$ Since this supersolution is unbounded as $x\rightarrow-\infty$, and since $u = 1$ serves as a global bound for all positive solutions with data less than or equal to one, we may form the generalized supersolution $\bar{u}(x,t) = \min_x\left\{ \bar\phi(x,t),1\right\}$ which retains the same spreading and asymptotic spatial decay properties as $\bar \phi$. 

\subsection{Subsolution}\label{ss:subsol}
 We construct two subsolutions. The first will allow us to obtain tail estimates on the nonlinear solution $u$.  We then construct a refined subsolution with a $\mu$-constant maximum value (in $z$ for each $t$). Using the tail estimates of the first subsolution, we then can scale and shift the refined subsolution to bound the nonlinear solution from below and obtain a lower bound on the spreading properties of the front for sufficiently large times.

We transform the nonlinear equation into the co-moving frame $z = x-\sigma(t)$, and define $\mathcal{N}[u] = \mathcal{L}u + \mu u^2,$ with $\mathcal{L}v = v_t - v_{zz}-c(t)v_z - \mu v$. For the first subsolution, we let $v(z,t) = \bar \phi(z+\sigma(t),t)$  be the solution of the linear equation $0 = \mathcal{L} v$ defined in Section \ref{ss:supsol} above. This can be made a nonlinear subsolution for $z$ large enough by defining $\underline{w}(z,t) = A(t) v(z,t) $, and inserting into the nonlinear equation to obtain $\mathcal{N}[\underline{w}] = v\left( A' + \mu A^2 v\right).$ Next we have there exists a constant $C>0$ such that $v(z,t) \leq C e^{\nu z}$. Then, for $z\geq z_s(t):= -\sqrt{2}\mu^{-1/2}\log(\mu^{-\alpha})$ with $\alpha>0$
\begin{align}
v(z,t) \leq C \mu^{-\alpha}.
\end{align}
Choosing $A$ to solve $A' = -C \mu^{1-\alpha} A^2$, this estimate gives
\begin{align}
\mathcal{N}[\underline{w}] \leq  v\left( A' + C\mu^{1-\alpha} A^2 \right)=0, \qquad\quad z\geq z_s(t).
\end{align}
Then for $\alpha>2$ we can choose $A(0)>0$ so that $A(t)$ is bounded from above and below by positive constants for all $t\geq0.$ We also remark that $z_s(t)$ is bounded for $\mu$ large.

We next construct a subsolution which gives a lower bound on the spreading properties of the front for sufficiently large times and which can be modified and shifted to bound the nonlinear solution from below for all $z$. 
We construct this subsolution using the somewhat standard approach of taking a difference of two exponential functions with  an appropriate time-dependent shift. We choose parameters in order to keep the subsolution bounded with maximum bounded away from zero for all $t>0$, and moving with asymptotic speed $c(t)$; see \cite[Prop. 3.4]{nadin12}.  As many quantities will be dependent on $t$, we suppress this in our notation below to simplify formulas whenever convenient.  We define $\varphi(z,t) = \psi(z,t) - \theta(z,t)$ with 
$$
\psi(z,t) = e^{\nu z},\qquad \theta(z,t) = e^{\kappa(z-\delta_2)}
$$
where $\kappa = (1+C_0) \nu$ for some constant $0<C_0<1$ fixed and independent of $t$ and $\epsilon$, and the shift $\delta_2 = \delta_2(t)$ is defined as
\begin{equation}
\delta_2  := -\frac{C_2}{\mu^{1/2}},
\end{equation}
for some constant $0<C_2<C_0$ to be chosen later. Note that $\varphi$ has the same leading order decay rate as the supersolution for $z\gg1.$  The $\mu^{-1/2}$ dependent shift $\delta_2$ is chosen to balance the $\mu^{1/2}$ behavior of $\nu$ and keep the unique maximum value of $\varphi$ bounded in $\mu$. The function $\varphi$ satisfies $\varphi(z,t)\geq 0$ on the domain $z\in[z_0(t),\infty)$ with 
$$
z_0(t) := \frac{ \kappa \delta_2 }{C_0 \nu} = -\frac{(1+C_0)C_2}{C_0\mu^{1/2}}.
$$
Further, it has a single critical point 
$$
z_* = \frac{\kappa \delta_2 + \log(\kappa/\nu)}{ \kappa-\nu} 
$$
with $\mu$-independent maximum value
\begin{align}
\varphi(z_*,t) &= (\kappa/\nu - 1) \exp\left[\frac{\kappa}{\nu - \kappa}\left(-\nu \delta_2 +  \log(\kappa/\nu)\right) \right]\notag\\
&= \frac{C_0}{(1+C_0)^{1+1/C_0}} \exp\left[\frac{C_2(1+C_0)}{\sqrt{2}C_0}\right],\label{e:vpzs}
\end{align}
and is monotonically decreasing for $z\geq z_*.$ Inserting $\varphi$ into the PDE, and observing that $(\partial_z^2 + c\partial_z + \mu)\psi = 0$ by the definition of $\nu$, we obtain
\begin{align}
\mathcal{N}[\varphi] &= \varphi_t + p(\kappa) \theta + \mu \varphi^2\notag\\
&= z \nu' \psi - (z-\delta_2)(1+C_0)\nu'\theta  + \kappa \delta_2'\theta + p(\kappa) \theta + \mu \varphi^2\notag\\
&=:I + II
\end{align}
where we define $p(\kappa) := (\kappa^2+c\kappa + \mu) = -\nu^2 C_0(1-C_0)$ in the first line, and set
$$
I = z \nu' \psi  - (z-\delta_2)(1+C_0)\nu'\theta + p(\kappa)\theta/2 ,\quad
II =p(\kappa)\theta/2 + \mu\varphi^2 + \kappa \delta_2' \theta\notag\\
$$
Here we decompose $\mathcal{N}[\varphi]$ into two groups $I$ and $II$ and separately bound both from above by 0. We have split the $ p(\kappa)\theta$ term to use its strict negativity for $C_0\in(0,1)$ in both groups of terms.

 We begin by showing $II\leq 0$. We first note that 
 \begin{equation}\label{e:kapdel1}
     \kappa \delta_2'\theta = -\frac{(1+C_0) C_2}{2\sqrt{2} \mu}\theta < 0 
 \end{equation} for all $t>0$. Hence it is sufficient to show that $p(\kappa)\theta/2 + \mu\varphi^2\leq0$ for suitably chosen $C_0$ and $C_2$ after possibly scaling $\varphi$ by some constant independent of $\mu$. We note straight-forward computation gives $\varphi(z_*) = C_0\theta(z_*)$. Also, for $C_2<\sqrt{2}\log(1+C_0)$, we have that  $\varphi(z_*) \leq C_0$ by \eqref{e:vpzs}. We also compute 
 $$
\varphi(z_*) \leq \varphi(z_*)/C_0 = \theta(z_*). 
$$
We then claim that $\varphi(z,t)^2 < C_0^2 \theta(z,t)$ for all $z\geq z_0.$ Momentarily assuming this claim, we then have that 
\begin{align}
\frac{p(\kappa)}{2} \theta + \mu \varphi^2 &= -\frac{\mu C_0}{4}(1-C_0) \theta + \mu \varphi^2\notag\\
&\leq \mu \left( -\frac{1}{4}C_0(1-C_0) +C_0^2\right)\theta\notag\\
&= \mu\left( -\frac{1}{4}C_0 + \frac{5}{4} C_0^2\right)\theta \leq 0\label{e:pkap2}
\end{align}
choosing $C_0$ sufficiently small - in particular $C_0 \in(0, 1/5).$ To prove this claim, we note the $z$ decay properties of $\psi$ and $\theta$ imply that $\varphi^2$ decays faster than $\theta$ for in $z\gg1$ and further that there exists a $z_2>0$ bounded in $\mu$ such that 
$$
\varphi^2\leq C_0^2 \theta, \qquad z\geq z_2. 
$$
Since $\varphi(z_*)$ is independent of $\mu$, and due to the quadratic dependence of the left-hand side, one can then obtain the same inequality on the uniformly bounded domain $(z_0,z_2)$ by rescaling $\varphi$ to be sufficiently small.  
All together, combining \eqref{e:kapdel1} and \eqref{e:pkap2} we then have that $II\leq 0.$


We now bound $I$. Define $B(z,t) := \nu' \psi - \nu'(1+C_0)(z - \delta_2)\theta.$  The general approach will be to first show that $B\leq0$ for all $z\geq z_1$ for some $z_1$ which is bounded uniformly in $\mu$. Then, we find that $ B\lesssim \mu^{-1/2}$ on the interval $z\in[z_0,z_1]$. This allows us to conclude, on the region where $B(z)>0$, that $B$ is dominated by $\frac{p(\kappa)\theta}{2}=-C_1\mu$  for $\mu$ sufficiently large.

\begin{Lemma}\label{l:bz}
On the interval $(z_0,\infty)$, $B(z,t)$ has one zero, $z_1$, satisfying $0<z_1<  C_3 \mu^{-1/2}$ with $C_3 = C_2\left(e^{\frac{(1+C_0) C_2}{\sqrt{2} }} - 1\right)^{-1}$, $B>0$ on $(z_0,z_1)$, and $B<0$ for $z\in(z_1,\infty).$
\end{Lemma}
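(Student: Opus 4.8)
The plan is to factor out the negative prefactor $\nu'$ and reduce everything to a sign/zero-count statement for a single $\mu$-independent profile. Write $B=\nu'\,G$, where $G(z):=z\psi-(1+C_0)(z-\delta_2)\theta$ is the combination for which $I=B+p(\kappa)\theta/2$; since $\nu'=-\tfrac12\sqrt{\epsilon/(2t)}<0$, the sign of $B$ is opposite to that of $G$, so it is enough to prove that $G$ has exactly one zero $z_1$ on $(z_0,\infty)$, with $G<0$ to its left and $G>0$ to its right, and $0<z_1<C_3\mu^{-1/2}$. The decisive observation is that every length in the problem---$z_0$, $\delta_2$, and the anticipated $z_1$---is $O(\mu^{-1/2})$, which dictates the rescaling $\xi:=-\nu z=\sqrt{\mu/2}\,z$. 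A one-line computation gives $\nu z=-\xi$ and $\kappa(z-\delta_2)=-(1+C_0)(\xi+C_2/\sqrt2)$, whence $G(z)=\mu^{-1/2}\hat G(\xi)$ with $\hat G(\xi)=\sqrt2\,\xi\,e^{-\xi}-(1+C_0)(\sqrt2\,\xi+C_2)e^{-(1+C_0)(\xi+C_2/\sqrt2)}$ carrying no residual $\mu$-dependence. This does two jobs at once: it strips $\mu$ from the transcendental equation, and it shows any zero $\xi_1$ of $\hat G$ is a fixed constant, so $z_1=\sqrt{2/\mu}\,\xi_1$ automatically has the claimed $\mu^{-1/2}$ scaling.

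For the zero count I would put $\hat G(\xi)=0$ into monotone form: multiplying by $e^{\xi}$ and dividing by $\sqrt2\,\xi+C_2$ yields, for $\xi>-C_2/\sqrt2$, the equivalent equation $L(\xi)=Q(\xi)$ with $L(\xi)=\sqrt2\,\xi/(\sqrt2\,\xi+C_2)$ and $Q(\xi)=(1+C_0)e^{-(1+C_0)C_2/\sqrt2}e^{-C_0\xi}$. Since $L'(\xi)=\sqrt2\,C_2/(\sqrt2\,\xi+C_2)^2>0$ and $Q$ is strictly decreasing, $L-Q$ is strictly increasing, hence has at most one zero; the limits $L-Q\to-\infty$ as $\xi\downarrow-C_2/\sqrt2$ and $L-Q\to1$ as $\xi\to+\infty$ give exactly one zero $\xi_1$, and $L(0)=0<Q(0)$ pins it at $\xi_1>0$, i.e.\ $z_1>0$. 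On the remaining sliver $z\in(z_0,\delta_2]$---where this rational reformulation degenerates because $\sqrt2\,\xi+C_2\le0$---I would instead use the algebraic identity $G=C_0(z_0-z)\psi+(1+C_0)(z-\delta_2)\varphi$, both of whose summands are $\le0$ there (as $\varphi\ge0$ on $[z_0,\infty)$ and $z-\delta_2\le0$), so $G<0$ with no hidden zero, while at $z=z_0$ the same identity makes $G(z_0)=0$ transparent. Monotonicity of $L-Q$ then yields the full sign pattern, i.e.\ $B>0$ on $(z_0,z_1)$ and $B<0$ on $(z_1,\infty)$.

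It remains to bound $z_1$ from above, and this is where I expect the real work. Because $L$ is increasing, it suffices to exhibit an explicit $\bar\xi$ with $L(\bar\xi)\ge Q(\bar\xi)$, for then $\xi_1\le\bar\xi$. Replacing the transcendental factor by $e^{-C_0\xi}\le1$ reduces $L(\xi)=Q(\xi)$ to the rational inequality $\sqrt2\,\xi/(\sqrt2\,\xi+C_2)\le(1+C_0)e^{-(1+C_0)C_2/\sqrt2}$, solvable in closed form, and---after unwinding $z_1=\sqrt{2/\mu}\,\xi_1$---produces a bound of exactly the stated shape $z_1\le C_3\mu^{-1/2}$ with $C_3$ a function of $C_0,C_2$ through $e^{(1+C_0)C_2/\sqrt2}$. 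The main obstacle is sharpness: the crude $e^{-C_0\xi}\le1$ step is lossy and leaves spurious factors of $(1+C_0)$ in the constant, so recovering the precise $C_3=C_2\big(e^{(1+C_0)C_2/\sqrt2}-1\big)^{-1}$ requires a tangent-line estimate of $e^{-C_0\xi}$ near $\xi=\log(1+C_0)/C_0$, the point at which $(1+C_0)e^{-C_0\xi}=1$ and about which $\xi_1$ sits to leading order. A secondary point is to verify that the standing constraints on the free constants ($0<C_2<C_0$, $C_0\in(0,1/5)$) force the positivity $1-(1+C_0)e^{-(1+C_0)C_2/\sqrt2}>0$ needed to keep the closed-form bound finite and nonvacuous.
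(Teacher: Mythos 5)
Your treatment of existence, uniqueness, positivity and the sign pattern of the zero is essentially the paper's own argument in rescaled clothing: the paper likewise factors $B=\nu'\tilde B$ with $\tilde B=z\psi-(1+C_0)(z-\delta_2)\theta$, disposes of $(z_0,\delta_2]$ by a direct sign check, and on $z>\delta_2$ rewrites $\tilde B=0$ as the intersection of the increasing function $(1+C_0)^{-1}z/(z-\delta_2)$ with the decreasing function $\theta/\psi$ --- exactly your $L=Q$ after the substitution $\xi=-\nu z$. The rescaling that makes the $\mu$-independence of $\xi_1$ manifest, and the identity $G=C_0(z_0-z)\psi+(1+C_0)(z-\delta_2)\varphi$ (which is correct, using $z_0=(1+C_0)\delta_2/C_0$ and $\varphi(z_0)=0$), are tidy improvements on the paper's ``brief calculation,'' but they are not a different route.

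The gap is in the upper bound on $z_1$, and it sits exactly where you flagged it. Replacing $e^{-C_0\xi}$ by $1$ majorizes $Q$ by the constant $Q_0:=(1+C_0)e^{-(1+C_0)C_2/\sqrt{2}}$, and the closed-form crossing point $\bar\xi=C_2Q_0/(\sqrt{2}(1-Q_0))$ is finite and positive only if $Q_0<1$, i.e. $C_2>\sqrt{2}\log(1+C_0)/(1+C_0)$. But the surrounding construction demands $C_2$ \emph{small} ($0<C_2<C_0$, $C_2<\sqrt{2}\log(1+C_0)$, and ``$C_2>0$ sufficiently small'' inside this very lemma), and for such $C_2$ one has $Q_0>1$, so your reduction is vacuous; indeed as $C_2\to0$ the true root tends to $\xi_1\to\log(1+C_0)/C_0$, which is not small. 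The tangent-line repair you allude to is therefore not optional, and you have not carried it out, so as written the bound $z_1<C_3\mu^{-1/2}$ is not proved. Two remarks to close it. First, everything used downstream is only $z_1\le C\mu^{-1/2}$ for \emph{some} $\mu$-independent $C$; by your own rescaling $z_1=\sqrt{2/\mu}\,\xi_1$ with $\xi_1$ a $\mu$-independent constant, this is automatic, so the cleanest fix is to prove the lemma with an unspecified constant. Second, do not burn effort landing on the exact displayed $C_3$: the paper's own device is to compare $L$ against $r=\frac{1}{2}Q_0<1$ (the factor $\frac{1}{2}$ forcing $r<1$ for $C_0<1$, at the price of needing $Q(\xi_1)\le\frac{1}{2}Q_0$, itself not justified without a lower bound on $\xi_1$), and neither the resulting $C_2r/(1-r)$ nor your $C_2Q_0/(1-Q_0)$ reproduces the constant in the lemma statement, which in fact fails against the small-$C_2$ limit $\xi_1\to\log(1+C_0)/C_0>1/(1+C_0)$.
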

\begin{proof}
First, as $\nu'<0$, we set $B = \nu' \tilde B$ with $\tilde B = z\psi - (1+C_0)(z-\delta_2)\theta$ and prove the analogous results for $\tilde B$. Next, we observe that our choice of $\delta_2$ gives, via a brief calculation, that $ \tilde B(z_0)=0$, and $\tilde B(z)<0$ for $z\in(z_0,\delta_2]$. We study zeros on $z>\delta_2$ by rearranging $\tilde B=0$ as
\begin{align}\label{e:z1p}
   (1+C_0)^{-1}\frac{z }{z-\delta_2} = \theta(z)/\psi(z). 
\end{align}
We observe the left hand side of \eqref{e:z1p} has a vertical asymptote at $z=\delta_2$, goes through zero at $z=0$, and has a horizontal asymptote to $(1+C_0)^{-1}$ as $z\rightarrow+\infty$. The right hand side, $\theta/\psi = e^{(1+C_0)\nu(z - \delta_2) - \nu z }$ is positive, monotonically decreasing, and hence bounded from above for $z>\delta_2$ by $\theta(\delta_2)/\psi(\delta_2)<1$. Further, by the monotonicity and asymptotics of each function we see that there is one root, $z_1$, of \eqref{e:z1p} in $[0,\infty)$. The sign properties of $\tilde B$ follow.    To bound $z_1$, it also follows that the zero $z_1$ lies to the left of the root, $\tilde z_1$, of the equation 
\begin{equation}\label{e:zt1p}
    \frac{z }{z-\delta_2} = r
\end{equation} 
with $r := \frac{(1+C_0)}{2}\theta(0)/\psi(0)= \frac{1+C_0}{2}e^{-(1+C_0)C_2/\sqrt{2}}$. The choice of $C_0$ gives that, for all $C_2>0$ sufficiently small, $0<r<1$.  We then compute
\begin{align}
    \tilde z_1 &= \frac{r \delta_2}{r-1} 
    = \frac{C_2 r}{1 - r} \mu^{-1/2}.\notag
\end{align}
The constant $C_3 = \frac{C_2 r}{1-r}$ multiplying $\mu^{-1/2}$ in the last line, is independent of $\mu$, and can be chosen to be positive for some $C_2>0$ small. This bounds the root $z_1$ from above by $C_3\mu^{-1/2}$ and hence gives the result of the lemma.
\end{proof}
From this lemma we have that $B\leq0$ for all $z\geq z_1$, leaving a bounded domain $[z_0,z_1]$ where $B\geq0.$  On this interval, direct computation then gives that $|B(z)|\leq \epsilon C_4 \mu^{-1/2}$ for some $C_4>0$ independent of $\mu$. One can also estimate 
\begin{equation}\label{e:minth}
\min_{z\in[z_0,z_1]}\theta(z) > \theta(\tl z_1) = \exp\left( -\frac{(1+C_0)C_2}{\sqrt{2}(1-r)}\right) > e^{-1}
\end{equation}
where we have used the fact that $-s/(1-\frac{1+C_0}{2}e^{-s}) \approx -2s/(1-C_0)$ for $s\approx0$ and set $s = \frac{(1+C_0)C_2}{\sqrt{2}}$. The last inequality follows from this approximation and the fact that $C_0\in(0,1/5)$ by our previous assumption.

We can then obtain $I\leq0$ for $z\in(z_0,z_1)$
\begin{align}
\frac{p(\kappa)}{2}\theta &= -\frac{C_1}{2}\mu\theta(z,t) \notag\\
&\leq -\frac{C_1}{2} \mu \min_{z\in[z_0,z_1]}\theta(z,t) \notag\\
&< -\frac{C_1}{ 2e } \mu  \notag\\
&\leq -B(z),
\end{align}
for $\mu$ sufficiently large, in particular $\mu\gtrsim \epsilon^{2/3}$ up to some constant dependent on $C_1$ and $C_4.$ Here the third line follows from \eqref{e:minth} and the last inequality by the uniform bound on $|B(z)|$ for $\mu$ sufficiently large. Hence we may conclude $B(z) + \frac{1}{2} p(\kappa)\theta \leq 0$ for all $z\geq z_0$.  Combining this with the bound on $II$ we then obtain $N[\varphi]\leq0$ as desired. We then define the generalized subsolution $\underline\varphi(z,t) =\max\{0,\varphi\}$ so that the function is strictly non-negative.

\subsection{Proof of Theorem \ref{t:sp}}
Before completing the proof, we briefly comment on the application of maximum and comparison principles in our case. Say we have $u(x,0) \leq v(x,0)$ with $N[u(x,t)] = 0$, $N[v(x,t)]\geq0,$ and we have apriori bounds $0\leq u,v\leq 1$ for $t\geq0$. To derive a comparison principle, we define $w = v - u$ so that $w(x,0)\geq 0$ and 
$
w_t -\left( w_{xx} + \mu w\left(1- (u+ v)\right) \right) \geq0.
$
We then define $\tilde w = \exp\left(\int_{T_0}^t \mu(s) ds\right) w$ for some initial time $T_0\geq0$ so that 
$$
\tilde w_t -\left( \tilde w_{xx} - \mu \tilde w (u+v)\right) \geq0,
$$
which has a linear coefficient $-\mu(u+v)$ on the $\tilde w$ term which is bounded from above due to the apriori bounds on $u$ and $v$ and the fact that $\mu(t)$ is bounded from below. We can then apply standard maximum principles to obtain that $\tilde w(x,t)\geq0$ and hence that $u(x,t)\leq v(x,t)$ for all $x$ and $t\geq T_0.$ A similar argument can be used to obtain a comparison principle for subsolutions. 

Since $\bar u(x,0) =u(x,0)$, the above maximum principle gives that $u(x,t)\leq \bar u(x,t)$ for all $t>0$. We then readily conclude the upper bound on the spreading location \eqref{e:sp1} from the construction of the supersolution $\bar u(x,t)$ in Lemma \ref{l:sup}. 

To conclude the lower bound, we further alter the subsolution $\underline{\varphi}$ following the approach outlined by \cite{hamel2013}. In particular, we note that all solutions of \eqref{e:fkpp0} with non-negative initial data bounded above by 1,  converge to $u\equiv1$ as $t\rightarrow+\infty$, locally uniformly in $x$. Letting $\gamma:=\underline{\varphi}(z_*,t)$, the $\mu$- and hence $t$-independent maximum value of the subsolution $\varphi$ defined in \eqref{e:vpzs}, there exists a time $T_0$ such that $u(0,t)\geq \gamma $, for all $t\geq T_0$. Since the maximum principle readily gives that $u(x,t)$ is monotonically decreasing in $x$ for all $t>0$, we then have $u(x,t)\geq \gamma$ for all $x\leq 0$ and $t\geq T_0$.  We then modify the generalized subsolution, 
$$
\underline{\varphi}_\gamma(z,t) = \begin{cases}
    \gamma,\qquad&z<z_*\\
    \underline{\varphi}(z,t),\qquad& z\geq z_*.
\end{cases}
$$

Next, since the preliminary subsolution $\underline{w}$ defined at the start of Section \ref{ss:subsol} implies that $u$ has the the same tail asymptotics as $\underline{\varphi}_\gamma$, in particular $u(x,t)\geq C e^{\nu (x-\sigma(t))}$ for $x\geq \sigma(t)+z_s(t)$, there exists a shift $x_1$, scaling constant $\beta>0$, and time $T_1\geq T_0$ sufficiently large so that Lemma \ref{l:bz} applies and
$$
u(x,T_1)\geq  \beta\underline{\varphi}_\gamma(x - \sigma(T_1) + x_1,T_1),
$$ for all $x$. Hence, as $\beta \underline\varphi_\gamma$ is also a subsolution,  the comparison principle gives
$$
u(x,t) \geq \beta\underline{\varphi}_\gamma(x - \sigma(t) + x_1,t), \qquad \text{ for } t\geq T_1. 
$$
Since $z_*$ decays to zero like $\mu^{-1/2}$ for $t$ large, we have for any $\eta>0$ that $x = \sigma(t)+z_*(t) \geq  \sigma(t)-\eta t$ for $t$ sufficiently large. This implies that $\underline\varphi_\gamma(x-\sigma(t)+x_1,t)\equiv \gamma$ along any ray $\{x = \sigma(t) -\eta t\}$ for $t$ sufficiently large. Pairing this with the locally uniform convergence of solutions $u$ with $0<u\leq1$ to $1$ as $t\rightarrow+\infty$, we obtain the lower bound \eqref{e:sp2},  completing the proof of Theorem \ref{t:sp}.

\section{Other pattern-forming equations and discussion}\label{s:disc}
We have studied front invasion into an unstable state in the presence of an increasing, slowly-varying parameter in both the case where the front leaves behind a spatially homogeneous state as well as the case where it leaves behind a locally periodic pattern. In both cases, a linearized analysis accurately predicts the nonlinear front position and spatial decay asymptotics of the leading edge and shows that fronts accelerate faster than the speed predicted by a frozen-coefficient analysis. We consider the former in the prototypical FKPP equation to introduce our approach and also establish a rigorous spreading result in the previously unstudied case (at least to our knowledge) of an unbounded parameter. We consider the latter case of patterned-fronts in the CGL equation, once again characterizing the accelerating invasion, but also using the leading-edge temporal oscillation frequency of the linearized equation to predict the local spatial wavenumber established at the leading edge. This leading-edge prediction then provides a dynamic Dirchlet boundary condition for an inviscid Burger's equation which accurately predicts the slowly-varying wavenumber dynamics in the wake of the front. In both cases, when the dynamic parameter has $\mu(0)<0$ so that the trivial state being invaded is initially stable, we employed the recently developed concept of a \emph{spacetime memory curve} to predict the spatially dependent delayed onset of instability and hence delayed invasion. 

The approaches for patterned fronts with a dynamic parameter developed here can, in principle,  be applied to other pattern forming systems, such as the Swift-Hohenberg equation, Cahn-Hilliard equation, or a reaction-diffusion system, where pattern-forming invasion into an unstable homogeneous state has been shown to exist for static parameters. We expect the introduction of a dynamic parameter would once again lead to an accelerating invasion front as well as a slowly varying-wavenumber. 
\begin{figure}[h!]    
    \centering
\includegraphics[width=.45\textwidth]{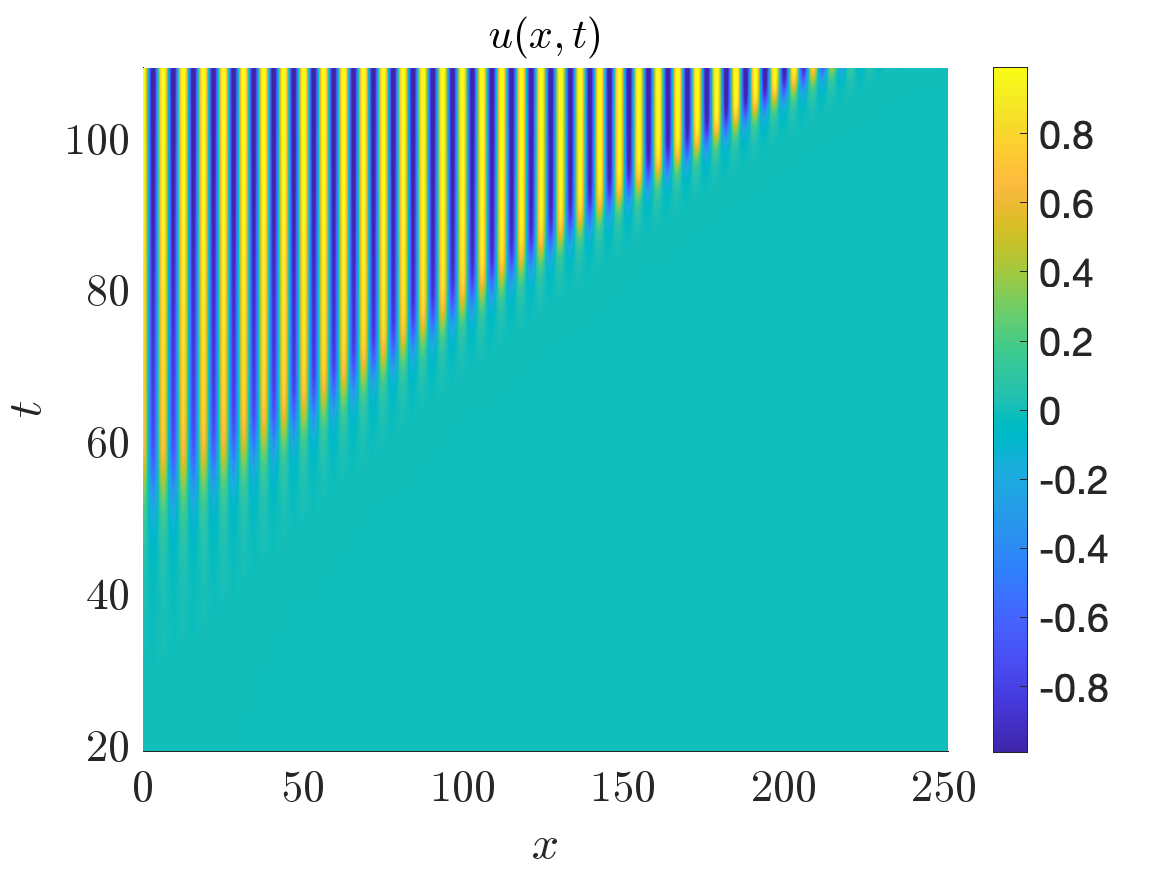}\,
\includegraphics[width=.45\textwidth]{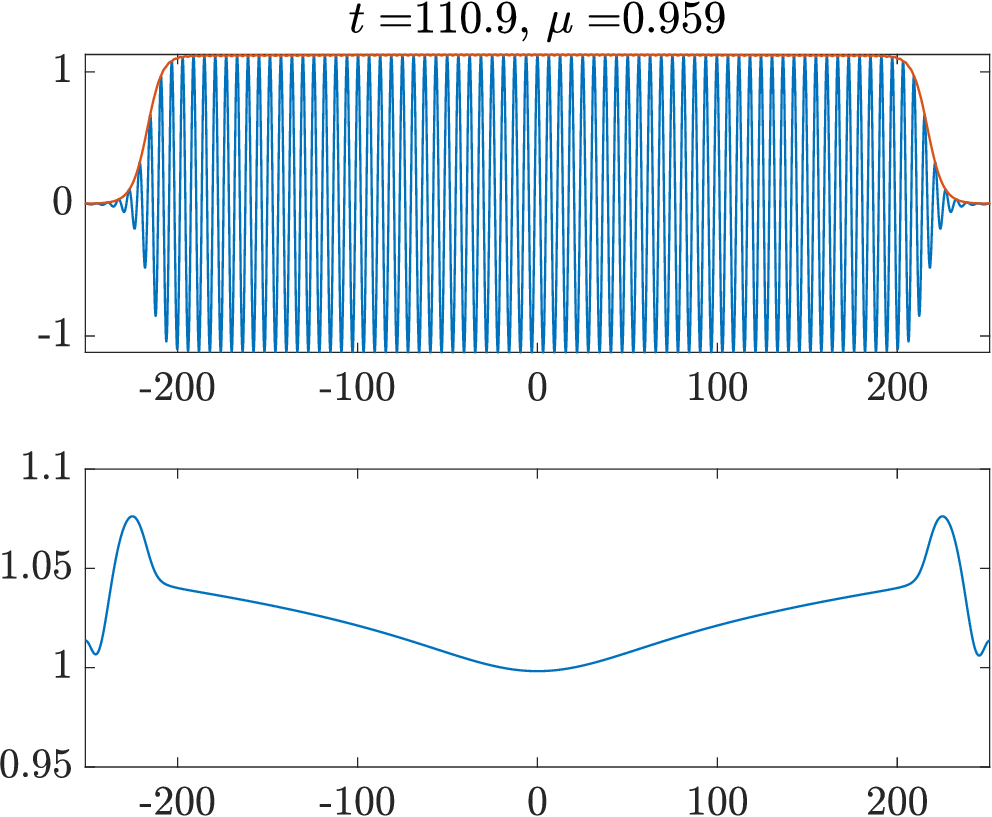}\\
\includegraphics[width=.45\textwidth]{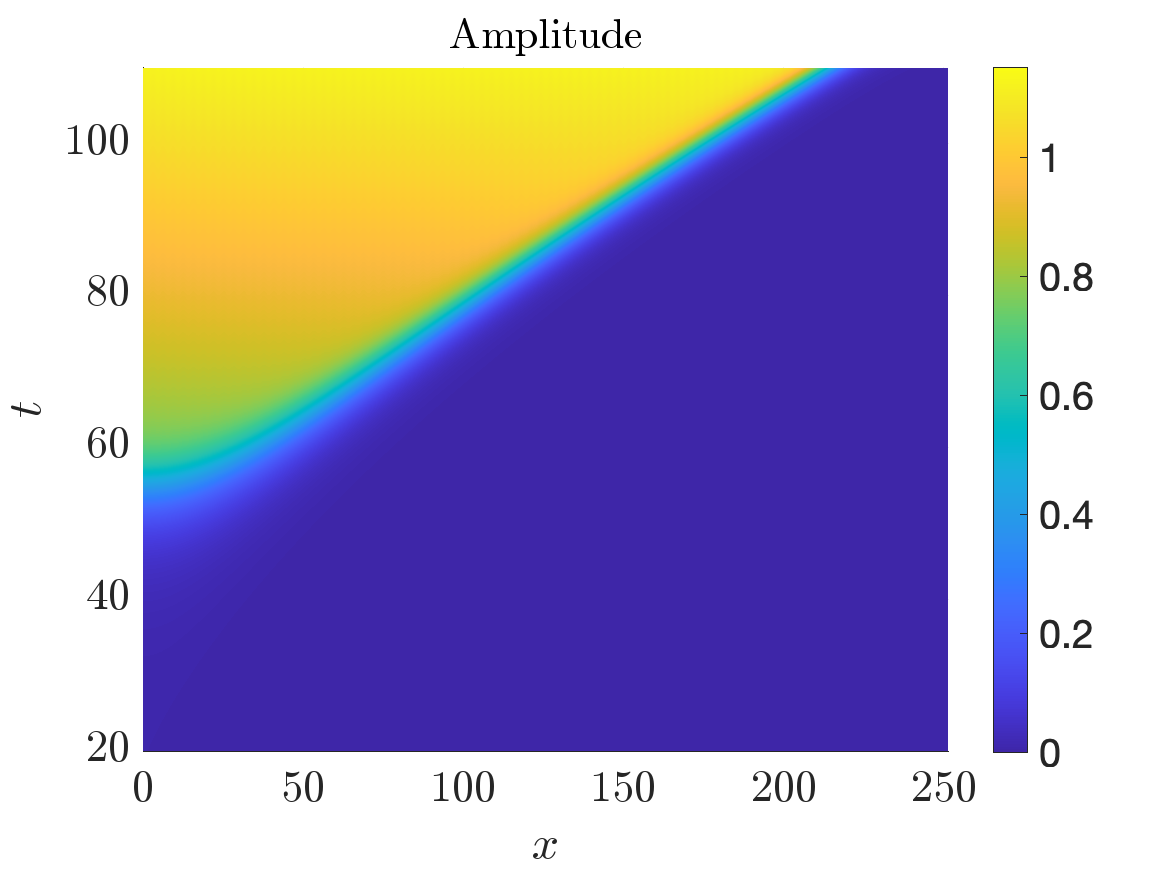}\,
\includegraphics[width=.45\textwidth]{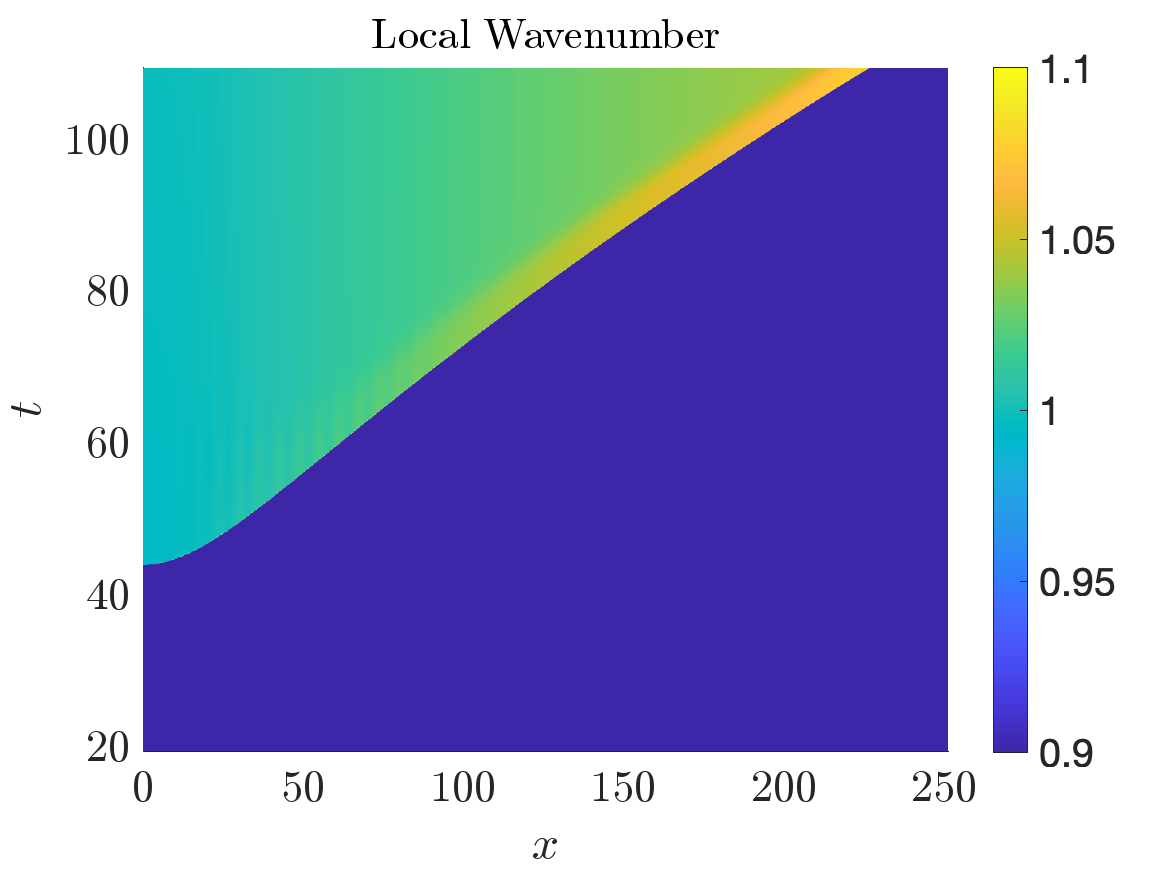}
    \caption{Delayed bifurcation and patterned invasion into an unstable state in \eqref{e:sh} with $\mu_0 = -0.15$ and $\epsilon = 0.01$. The equation used spectral discretization in space on the domain $x\in[-80\pi,80\pi]$ with $N = 2^{13}$ modes and 4th order Runge-Kutta Exponential Time differencing \cite{kassam05} with $dt = 0.001$ and Gaussian initial data $u(x,0) = \re^{-x^2}$. Upper left: Spacetime diagram of solution; Upper right: depiction of the solution profile (top)  $u(x,t)$ (blue) along with its amplitude (orange) and local wavenumber (bottom) at time $t = 110.0$; In the bottom row, spacetime diagrams of the amplitude and local wavenumber are plotted. Wavenumber measured using the iterative Hilbert transform approach; see  \cite{dunn2024transverse} for an example of this. }\label{f:sh}
\end{figure}
\begin{figure}[h!]    
    \centering
\includegraphics[width=.5\textwidth]{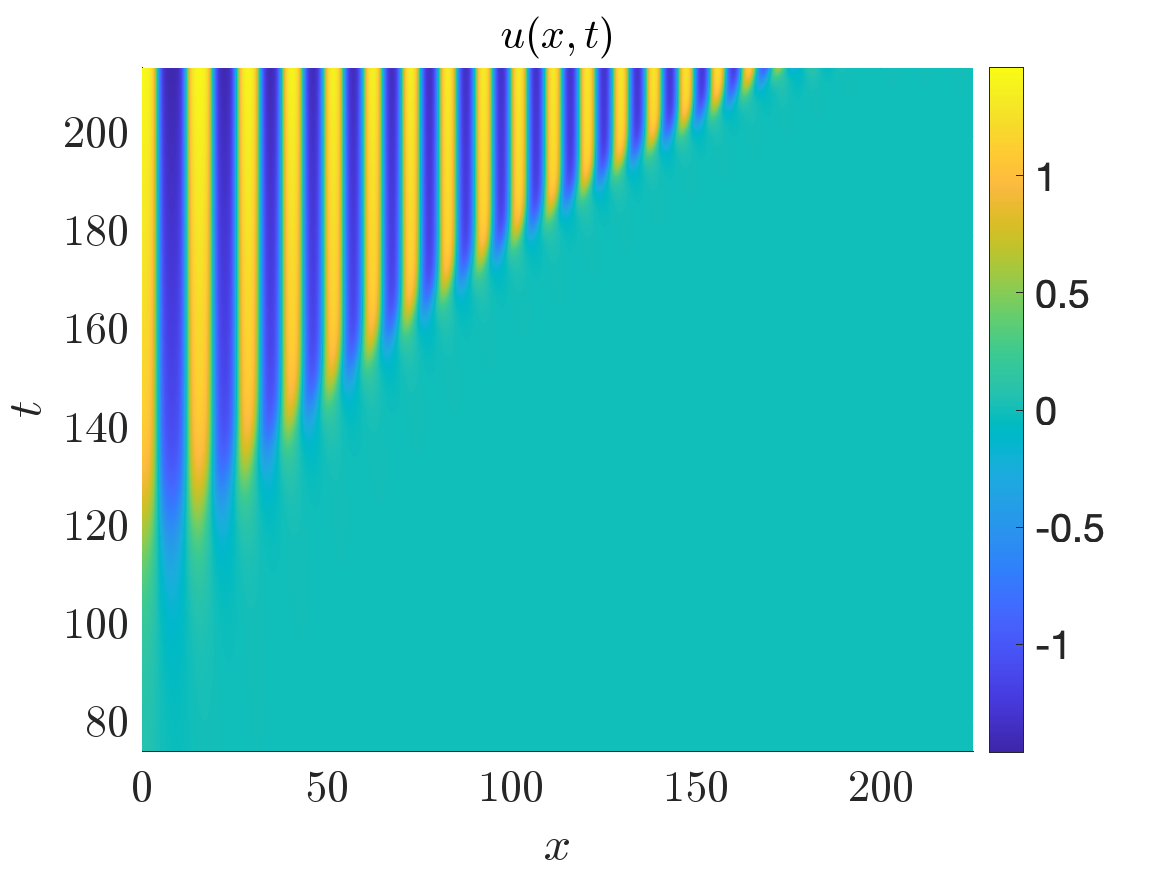}\,
\includegraphics[width=.4\textwidth]{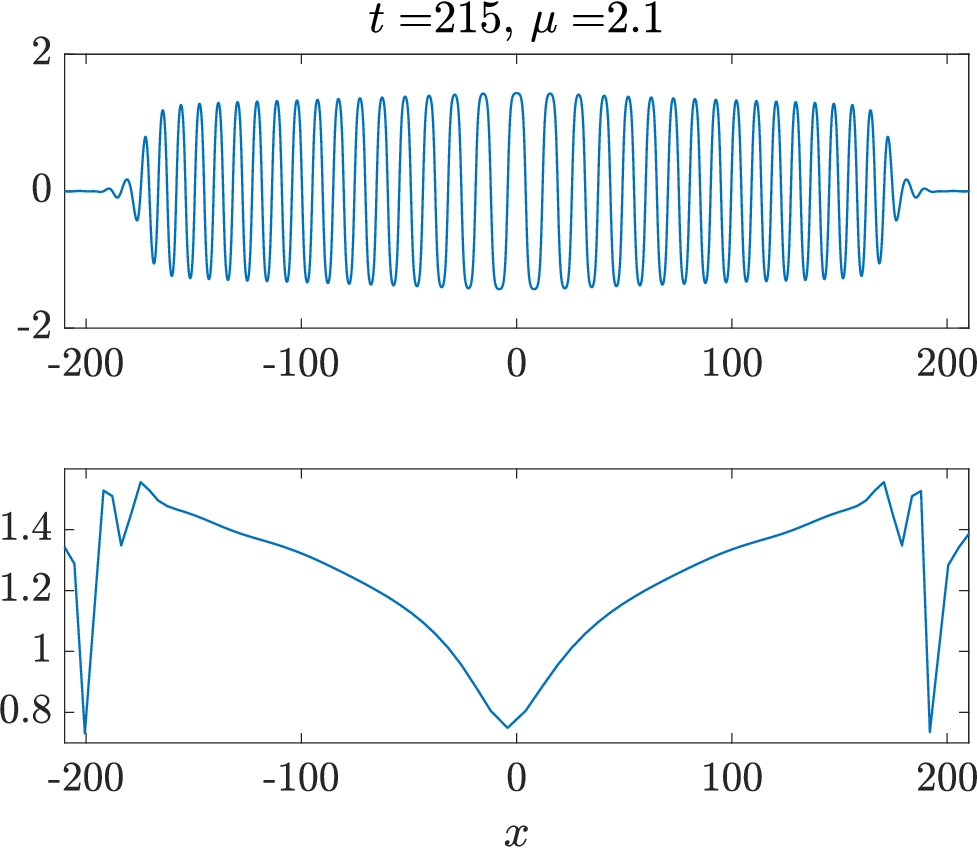}\\
    \caption{Delayed bifurcation and patterned invasion into an unstable state in \eqref{e:ch} with $\mu_0 = -0.05$ and $\epsilon = 0.01$. The same simulation approach, parameters, and initial condition as in Figure \ref{f:sh} were used.  Left: Spacetime diagram of solution; Right: depiction of the solution profile $u(x,t)$ (top) and local wavenumber (bottom) at time $t = 215$; Wavenumber determined by measuring distances between consecutive zeros of the solution, note variations occur as front tail gets close to zero.  }\label{f:ch}
\end{figure}
Figures \ref{f:sh} and \ref{f:ch} respectively provide preliminary numerical simulations of the Swift-Hohenberg and the Cahn-Hilliard equations with dynamic parameter $\mu(t)$ with $\mu_0$:
\begin{align}
u_t &= -(1+\partial_x^2)^2 u + \mu(t) u - u^3,\label{e:sh}\\
u_t &= -\partial_x^2\left(\partial_x^2 u + \mu(t)u - u^3 \right).\label{e:ch}
\end{align}
We observe accelerated invasion with non-constant front speed, and a spatially varying non-constant wavenumber selected in the wake. Indeed, we find as the speed increases, the wavenumber selected at the front interface also increases.  For the Swift-Hohenberg equation \eqref{e:sh}, the amplitude and local wavenumber of the patterned-front was measured using the iterative Hilbert transform approach; see \cite{dunn2024transverse} for the detail of this computational technique. While wavenumber variations here are relatively small due to the strong preference for wavenumbers close to 1 for $\mu$ close to zero, we indeed find the the invasion front leaves a slowly-increasing wavenumber as it increases. A plot of the solution (blue) and its amplitude (orange) for a fixed $t$ is given in the upper frame of the top right plot in Figure \ref{f:sh}, while the local wavenumber is given in the bottom frame.  Further a spacetime diagram of the wavenumber dependence for a range of times $t$ is given in the bottom right plot. In this last plot, we have thresholded the wavenumber profile, setting it equal to zero whenever the amplitude of the front was less than $0.05$. The bottom left plot of Figure \ref{f:sh} also depicts the amplitude of the front which, since $\mu_0<0$, undergoes spatio-temporal delayed onset and invasion just as in the FKPP and CGL equations above. 

We observe similar behaviors in the Cahn-Hilliard equation \eqref{e:ch}, where a Gaussian perturbation in the center of the domain initially leads to the (delayed) formation of a large kink/anti-kink solution. A front forms where further kinks and anti-kinks are layed down, with the distance between peaks narrowing as time evolves and $\mu$ increases. We note that here the wavenumber was determined by measuring the distance between consecutive zeros of the solution for each fixed $t$.  

 We expect that the linearized equation about $u = 0$ -   which is $v_t = -(1+\partial_x^2)^2 v + \mu(t) v$ in the case of \eqref{e:sh} - will give leading-order predictions for patterned-invasion in both equations.  If an explicit solution for $v$, as in the diffusive scalar case, is not attainable we expect that one could obtain a leading order diffusive expansion about the critical mode $e^{ix}$ via its pointwise Green's function \cite{holzer2014criteria}, from which front position $\sigma(t)$ and frequency $\omega(t)$ could be obtained as above. As we expect the oscillatory front interface to conserve nodes \cite{van2003front}, the frequency and speed $c(t) = \sigma'(t)$ at the leading-edge would then give the local selected wavenumber $k(t) = \omega(t)/c(t)$. From the standpoint of delayed bifurcation, it would be interesting to apply the recently developed geometric blow-up techniques of  \cite{hummel2025geometric,jelbart2024formal}, which obtain Ginzburg-Landau modulational equations for the solution in various blow-up charts near $(u,\mu )= 0$,  in order to more precisely characterize the spatio-temporal bifurcation delay; see also \cite{avitabile2020local}. More generally, it would be of interest whether a time-dependent modulation equation could be derived which not only predicts amplitude and front interface but also the selected wavenumber for a long but finite time interval. As a rigorous proof of pattern selection by invasion into an unstable state for a even static parameter $\mu = \mu_0>0$ is still open for general systems (see \cite{avery2024sharp,avery2024front,avery2023stability} for related works), we expect rigorous existence and selection proofs to be difficult.

It would also be of interest to consider dynamic parameters in systems with \emph{pushed} fronts \cite{van2003front}, where the front interface spreads with speed faster than predicted by the linearized equation. One natural place to begin this study would be the Nagumo-scalar reaction diffusion equation 
\begin{align}\label{e:nag}
u_t = u_{xx} +u(1-u)(\mu(t)+u)
\end{align}
where $u=1$ invades $u = 0$. When $\mu$ is constant, it is known that invasion fronts are pushed for $\mu\in(0,1/2)$ and travel with the speed $c_p(\mu)=\sqrt{2}(1/2+\mu)$, invading with speed faster than predicted by the linearized equation. For $\mu\geq 1/2$, the selected front is pulled, traveling with the linearly predicted \emph{pulled} speed $c_{lin}(\mu) = 2\sqrt{\mu}$; see \cite{avery2023pushed} and references therein for a summary of these results. Figure \ref{f:nag} gives the results of direct numerical simulation of \eqref{e:nag} with linear ramp $\mu(t)$ which slowly ramps through the pushed regime and then into the pulled regime for larger $\mu$.  See also Example 2 of \cite{nadin12} which gives an explicit formula for the front in the case of $\mu(t)>1$.
\begin{figure}[h!]    
    \centering
    \includegraphics[width=.46\textwidth]{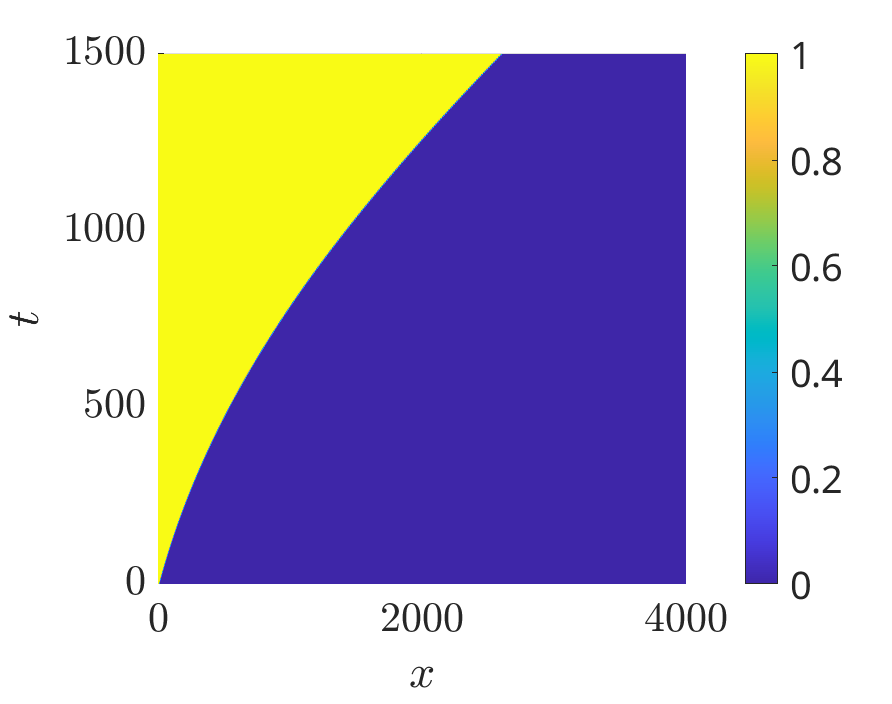}\,
\includegraphics[width=.4\textwidth]{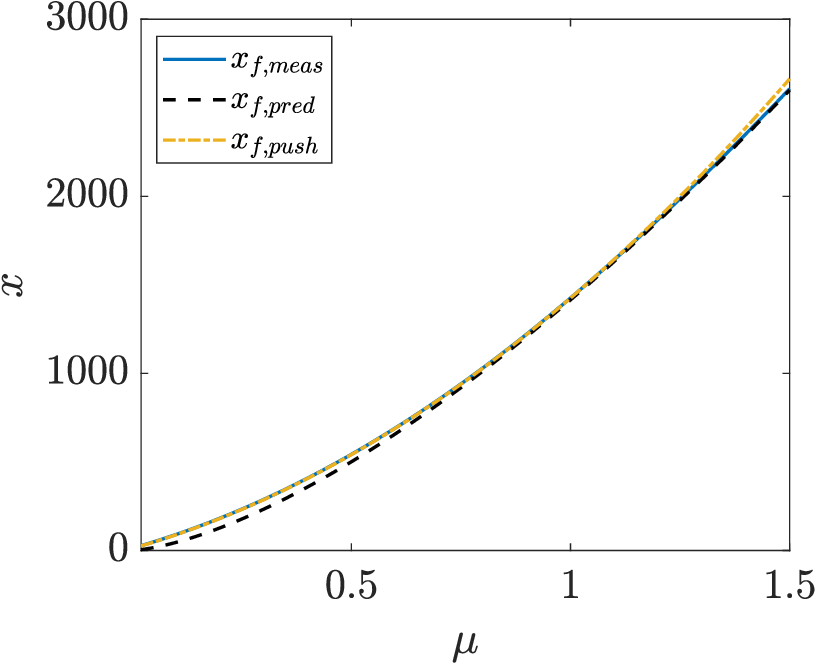}\,
\includegraphics[width=.4\textwidth]{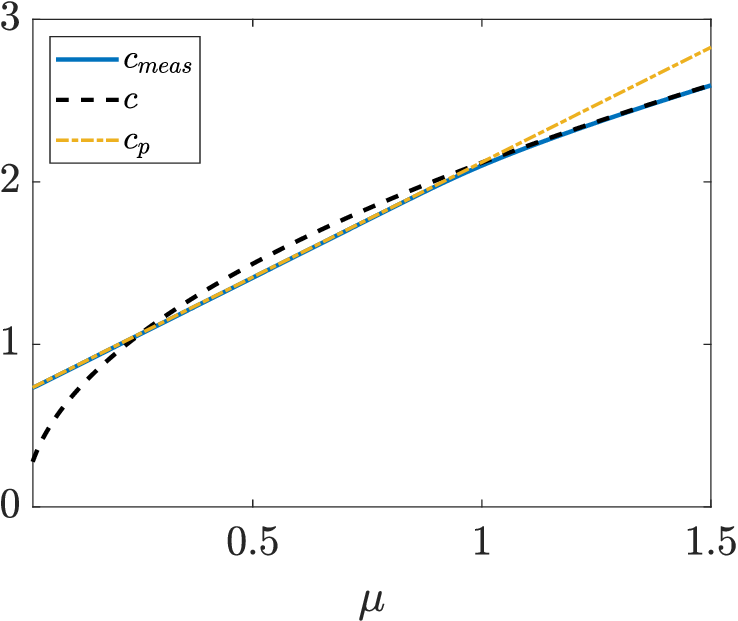}\,
\includegraphics[width=.4\textwidth]{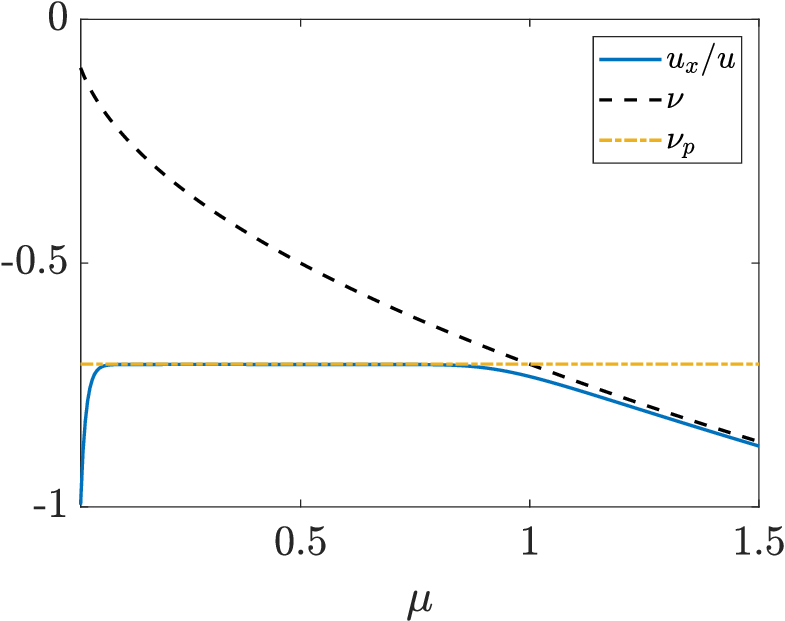}\,
    \caption{Direct numerical simulations of \eqref{e:nag} with $\mu_0 = 0$ and $\epsilon = 0.001$, using same numerical approach and parameters as in Fig. \ref{f:front_comp}. Top left: Space time diagram of solution; Top right: comparison of measured front position (blue) to dynamic linear prediction $x_{f,pred}$ (dashed black) defined in \eqref{e:sig}, and frozen coefficient pushed prediction $x_{f,p}$ (dot-dashed yellow) defined below. Bottom left: Comparison of the measured instantaneous front speed (blue) with the dynamic linear prediction (dashed black) and frozen-coefficient pushed speed (yellow dot-dashed); Bottom right: comparison of the leading-edge decay of the front (measured using $u_x/u$) with the dynamic linear prediction $\nu(\mu)$ and frozen-coefficient pushed prediction $\nu_p(\mu)$. }\label{f:nag}
\end{figure}

Interestingly, and in contrast to the FKPP equation above, we observe that the front initially invades with the frozen-coefficient speed $c_p(\mu(t))$ defined for the constant coefficient equation; see Figure \ref{f:nag} top left. Integrating this frozen-coefficient speed, then gives an accurate prediction for the front position $x_{f,p} := \int_0^t c_p(s) ds.$; see Figure \ref{f:nag} top right. In this regime, we observe that the leading-edge of the front exhibits the typical steep decay of pushed fronts, behaving like $u(x,t)\sim e^{\nu_p x}$ where $\nu_p$ is a root of the linear dispersion relation $\nu^2 + c_p(\mu) \nu + \mu$.  Interestingly, we find for $\mu\in[0,1/2]$ the front selects the most negative root $\nu_p = -c_p/2 -\sqrt{c_p^2/4 - \mu}$ while for $\mu$ above 1/2, roughly until 0.8 it selects the less negative root $-c_p/2 +\sqrt{c_p^2/4 - \mu}$. We expect this corresponds to the frozen-coefficient pushed-pulled transition at $\mu = 1/2$ where the traveling wave heteroclinic of $0 = u_{zz} + c_p(\mu) u_z + u(1-u)(\mu+u)$ converges into the origin along the weak stable eigendirection instead of the strong stable one; see also Fig. 5 of \cite{goh11}.

As $\mu$ increases we observe a transition regime for $\mu$ approximately between 0.8 and 1 where the front speed transitions to the dynamic linearly predicted speed $c(\mu) = \frac{3}{2}\sqrt{2\mu}$ derived in \eqref{e:ct} - \emph{not} the instantaneous speed $c_{lin} = 2\sqrt{\mu}$ - and the front position moves with the corresponding linear prediction $x_{f,pred} = \sigma(t)$  in \eqref{e:sig}.  Further, as seen in Figure \ref{e:nag} bottom right, the leading-edge front decay rate transitions in this regime between $\nu_p$ to the dynamic linear prediction $\nu(\mu) = -\sqrt{\mu/2}$ derived in \eqref{e:nu} above. We particularly note that the dynamic transition, denoted as $\mu_*$, from pushed to pulled invasion is delayed, happening at $\mu$ values (roughly 0.8) larger than the constant-coefficient pushed-pulled transition $\mu = 1/2.$  It would be of interest to determine the dependence of the delay $\mu_* - 1/2$ as $\epsilon\rightarrow0^+.$ Further, we also note the possible connection to the ``semipushed" transition regime discussed in \cite{birzu2018fluctuations}.



In a pattern-forming system, one might consider subcritical versions of the complex-Ginzburg Landau $A_t = (1+i\alpha) A_{xx} + \mu(t) A + (\rho + i\gamma) A|A|^2 - (1+i\beta)A|A|^5$,  Swift-Hohenberg $u_t = -(1+\partial_x^2)^2 u + \mu(t) u +\gamma u^3 - u^5$, or Cahn-Hilliard equations $u_t = -\left(u_{xx} + \mu(t)u + \gamma u^3 - \beta u^5 \right)_{xx}$, with $\mu$ increasing from zero for suitably chosen constants $\gamma, \beta>0$. Here, we expect fronts to initially resemble the pushed-type fronts until $\mu$ passes through a critical value $\mu_*$ where constant-coefficient fronts become pulled.

\bibliographystyle{siam}
\bibliography{main}

\end{document}